\documentclass[journal]{IEEEtran}

\usepackage{graphicx}
\usepackage[colorlinks,bookmarksopen,bookmarksnumbered,citecolor=blue,urlcolor=red]{hyperref}
\usepackage{amssymb}
\usepackage{amsbsy}
\usepackage{ulem}
\usepackage{upgreek}
\usepackage{algorithm}
\usepackage{algorithmicx}
\usepackage{algpseudocode}
\usepackage{threeparttable}
\usepackage[cmex10]{amsmath}
\usepackage{array}
\usepackage[tight,footnotesize]{subfigure}
\usepackage{caption}
\usepackage{url}
\usepackage{epstopdf}
\usepackage{verbatim}
\usepackage{color}

\usepackage{xspace}
\newcommand*{\eg}{e.g.\@\xspace}

\makeatletter
\newcommand*{\etc}{%
    \@ifnextchar{.}%
        {etc}%
        {etc.\@\xspace}%
}
\graphicspath{{figures/}}

\hyphenation{op-tical net-works semi-conduc-tor}

\begin{document}

\newtheorem{thm}{Theorem}
\newtheorem{lma}{Lemma}
\newtheorem{defi}{Definition}
\newtheorem{proper}{Property}

\title{Challenges, Designs, and Performances of a Distributed Algorithm for Minimum-Latency of Data-Aggregation in Multi-Channel WSNs}

\author{Ngoc-Tu~Nguyen,~
    Bing-Hong~Liu,
    Shao-I Chu,~and
    Hao-Zhe~Weng
\IEEEcompsocitemizethanks{\IEEEcompsocthanksitem N.T.~Nguyen is
with the Department of Computer Science \& Engineering, University of Minnesota, Twin Cities, Minneapolis, MN 55455, USA.
\IEEEcompsocthanksitem E-mail: nguy3503@umn.edu.}
\IEEEcompsocitemizethanks{\IEEEcompsocthanksitem B.H.~Liu, Shao-I Chu, and H.Z~Weng are
with the Department of Electronic Engineering, National Kaohsiung
University of Applied Sciences, 415, Chien Kung Rd., Kaohsiung
80778, Taiwan.
\IEEEcompsocthanksitem E-mail: bhliu@kuas.edu.tw, erwinchu@kuas.edu.tw, 1103305142@gm.kuas.edu.tw.}
}

\IEEEcompsoctitleabstractindextext{%
\begin{abstract}
In wireless sensor networks (WSNs), the sensed data by sensors
need to be gathered, so that one very important application is periodical data collection.
There is much effort which aimed at the data collection scheduling algorithm development
to minimize the latency. Most of previous works investigating the minimum latency of data collection
issue have an ideal assumption that the network is a \textit{centralized system},
in which the entire network is completely synchronized with full knowledge
of components. In addition, most of existing works often assume that
any (or no) data in the network are allowed to be aggregated into one packet and
the network models are often treated as tree structures.
However, in practical, WSNs are more likely to be \textit{distributed systems},
since each sensor's knowledge is disjointed to each other, and
a fixed number of data are allowed to to be aggregated into one packet.
This is a formidable motivation for us to investigate the problem of minimum latency for the data aggregation without data
collision in the distributed WSNs when the sensors are considered to be assigned the channels and
the data are compressed with a flexible aggregation ratio, termed
the minimum-latency collision-avoidance multiple-data-aggregation scheduling with multi-channel (MLCAMDAS-MC) problem.
A new distributed algorithm, termed the distributed collision-avoidance scheduling (DCAS) algorithm,
is proposed to address the MLCAMDAS-MC. Finally, we provide the theoretical analyses of DCAS
and conduct extensive simulations to demonstrate the performance of DCAS.
\end{abstract}
\begin{IEEEkeywords}
Data aggregation, algorithm design and analysis, distributed algorithm, collision free, wireless sensor networs (WSNs).
\end{IEEEkeywords}}

\maketitle

\IEEEdisplaynotcompsoctitleabstractindextext

\IEEEpeerreviewmaketitle

\section{Introduction}
In Wireless Sensor Networks (WSNs), one very important application is periodical data collection.
Generally, data collection can be viewed under two stages as \textit{data generation}, in which the data
are periodically generated by sensors, and \textit{data aggregation}, in which after the generated data from the sensors
compressed by using some aggregation functions, \eg MAX, MIN, SUM, and \etc are reported to a specific node, called the sink.
Since the \textit{data generation} was fulfilled because of the development of sensing capacity of the sensors,
the prerequisite for deciding the performance of WSNs is \textit{data aggregation} capacity that
reflects how fast data been collected by the sink.

To enable efficient data collection in the WSNs, for the last $10$ years
there are a lots of effort which aimed at the routing algorithm development with collision free when a fixed number of data
are allowed to be aggregated into one packet \cite{Madden:2002:TTA:844128.844142}.
It is worth mentioning that there exists a problem called minimum-latency collision-avoidance multiple-data-aggregation scheduling (MLCAMDAS)
\cite{8051077} that is successfully investigated minimum the latency of data collection with a flexible aggregation ratio $\alpha$ in WSNs,
in which $\alpha$ is the maximum number of data allowed to be aggregated into one packet.
However, in reality, to guarantee the data collision free sensors are not only assigned to time slots reasonably but also
need to be set communication channels. It is essential to guarantee the collision free and there is no above research works fully consider
a communication channels scheduling for the data collection application in the WSNs.
In addition, most of existing works studied the data collection issue in the WSNs
under an ideal assumption that the entire network is completely synchronized with full knowledge
of components. It is commonly known as \textit{centralized wireless sensor network} or generally \textit{centralized system}.
Likewise, many centralized algorithms based on the above assumption are designed and analyzed with nice performance
to solve problems of data collection in the \textit{centralized wireless sensor network}.
In partial application, the WSNs are more likely to be \textit{distributed systems},
in which each sensor's knowledge is disjointed to each other, even the sink
is not with full information of the network either.
This is a formidable motivation for us to investigate the problem of
minimum latency for the data aggregation application without data
collision in the distributed WSNs when the sensors are considered to be assigned the channels and
the data are compressed with a flexible aggregation ratio, termed
the minimum-latency collision-avoidance multiple-data-aggregation scheduling with multi-channel (MLCAMDAS-MC) problem.

During studying the MLCAMDAS-MC problem, many new challenges are realized and compared with that in
previous works. We summarize these main challenges as follows.

\begin{itemize}
  \item \textbf{C1:} To guarantee the collision free, every sensor node are not only required to be assigned to time slots, but also
  need to be set to appropriate channels. Unlike in the centralized system,
  they can collect overall information of the network and give the time slot
  and the channel assignment evaluations based on all sensor nodes' information
  such as the time clock, the packet size, and the links between the sensors \etc
  to schedule collision free algorithms for data transmissions in the network,
  in a distributed WSN, we have to guarantee the collision free based on only local
  information of each sensor node. It is clearly much harder and requires a complex technique
  to provide the collision free algorithms for data collection in the distributed WSNs.

  \item \textbf{C2:} To achieve the minimum latency for data collection, it requires
  an efficient distributed algorithm in the distributed WSN. In the centralized wireless sensor network,
  they can provide an optimized algorithm for data transmissions in the network because
  the entire network is synchronized completely with full knowledge information of all components.
  As always, following those existing optimized algorithms are no longer suitable for a distributed WSN.
  Thus, how to design an optimized distributed algorithm for data collection in the distributed WSNs
  is a challenge.

  \item \textbf{C3:} The third challenge is how to theoretically analyze the framework to
  design an actual distributed algorithm. Since there is no way to get exactly the
  parameters of multiple sensors at the same time, it is difficult to determine
  relevant routings for the data transmissions of the sensor nodes. The question of guarantee for
  minimum data collection's latency and collision free becomes harder and harder.
  Hence, the performance of the desired algorithm not only depends on
  an comprehensive evaluation but also requires the specific techniques and mechanism for the data transmissions.
\end{itemize}

To address these challenges, we proposed a new distributed algorithm,
termed the distributed collision-avoidance scheduling (DCAS) algorithm
to let every sensor $u$ iteratively schedule local data transmissions
for each time slot by its 3-hop neighboring information, until no more data are required to be scheduled by $u$.
On the comprehensiveness, we implement our proposed method through simulations and analyses.
We summarized the main contributions of this paper as follows.

\begin{itemize}
  \item We study the problem of finding a schedule of forwarding data to
    the sink without data collisions such that the number of required
    time slots is minimized, termed the MLCAMDAS-MC problem. In addition, the difficulty of the MLCAMDAS-MC problem is provided.
  \item We introduce an extended relative collision graph $G_{r+}$ to represent the
    collision relation between any data transmission for the MLCAMDAS-MC problem.
    Based on the obtained $G_{r+}$, we propose a new distributed algorithm,
    termed the distributed collision-avoidance scheduling (DCAS) algorithm.
  \item Theoretical analyses of the DCAS show its correctness. It indicates that the data transmissions scheduled by the DCAS
    are collision-free in each time slot as well as no sensors are in a circular wait for making schedules of data transmissions in the DCAS.
    We also conduct the simulations to demonstrate the performance of the DCAS. The results show that the DCAS provides
    better performance than the existing solution used for the MLCAMDAS-MC problem with an adjustable data aggregation ratio $\alpha$.
\end{itemize}

\textbf{Organization:}
The remaining sections of this paper are organized as follows.
A summary of the related works is in Section \ref{relatedw} to give the readers a whole picture
on the data collection and the distributed WSN. In Section \ref{SysMod}, the network model is introduced.
The MLCAMDAS-MC problem and its difficulty are illustrated in Section \ref{scheduling}.
We introduce an extended relative collision graph $G_{r+}$ in Section \ref{relative}.
According to the $G_{r+}$, the DCAS algorithm is presented in Section \ref{sec:proposed_algorithm}.
The theoretical analyses are provided in Section \ref{section:analysis}.
In Section \ref{section:Simulation}, we evaluate the performance of the DCAS.
Finally, this paper is concluded in Section \ref{section:Conclusion}.

\section{Related Work}\label{relatedw}
Many research works have studied to improve the efficiency of data collection
for both the centralized WSNs \cite{6570720,8051077} and distributed WSNs \cite{5462033,6195594}.
In \cite{1035242}, the authors proposed a chain-based protocol, named PEGASIS
to reduce the energy consumption of sensors during data collection process.
The idea of the PEGASIS is to collect data through a connected chain through sensors to the sink.
The authors in \cite{4195130} improve the PEGASIS by grouping the sensors into clusters.
The data are forwarded to the sink from the sensors through the cluster head.
However, in these studies they do not consider to eliminate the data collisions of data transmissions.
In \cite{6195594}, a distributed data collection algorithm is proposed
to increase the data collection capacity, in which the network is organized as a connected dominating set and the data can be collected through
the dominators. Even the authors in \cite{6195594} claim to be distributed algorithm;
unfortunately, during the construction of the Connected Dominating Set for
the network, they have accidentally treated the network as an centralize system,
so that it cannot be applied for an completely distributed network.

In WSNs, sensor may incur the data collision with the others during the data transmission,
resulting in data loss or failure. In recent years, there has been much effort to to
improve the latency of data collection as well as eliminate the data collision. The
time-division multiple access (TDMA) is one of the most common channel access techniques
used medium access control (MAC) protocol to allow multiple sensors to transmit data without collisions
at the same time slot. In \cite{5935125}, the authors proposed a novel tree-based TDMA scheduling, named
the traffic pattern oblivious (TPO) to achieve data collision free for data collection.
In this work, all sensor nodes in the network are step by step assigned to time slots
to avoid the data collisions. Since the network is structured as a tree structure,
the time slot assignment is conducted from the leaf nodes forward.
The authors in \cite{Ergen2010} proposed the node-based scheduling algorithm (NBSA)
and the level-based scheduling algorithm (LBSA) to minimize the latency of
data collection. In this work, the NBSA and the LBSA use a color graph to represent
the data collision between sensors' data transmission. Essentially, the above studies
achieve the data collision free and data collection capacity. However,
the data routings are limited in using the tree-based methods to minimize the latency of data collection.
In addition, the raw data are not aggregated before transmitting through the network.

Since reducing the sensors' energy consumption is a big challenge of
data collection application in WSNs, data aggregation technology appears as the best solution to allow data to be aggregated
in the data collection. Data aggregation uses the functions of
MIN, MAX, SUM and COUNT \etc to aggregate multiple packets into one packet.
In \cite{5462033,6195659}, the authors investigated the construction of data aggregation
with minimum energy cost in WSNs, in which the data aggregation technique
is applied to reduce the size of data packets. The authors in \cite{5444884,7046421}
proposed methods based on the connected-dominating-set (CDS) tree
to minimize the latency and achieve collision free in WSNs.
In the studies, the main idea is to organize the network as
a CDS tree, in which each sensor is treated as either a dominator or
a dominatee. The dominators are responsible to aggregate all data from
dominatees and the data collisions are eliminated by using the TDMA.
In these research works, the authors studied the data collection issue
under an ideal assumption that the network is structured with the tree root be
the topology center. The communication capacity of sensors in WSNs is recently enhanced by applying
the multi-channel technology, in which sensors can use the IEEE 802.15.4 protocol with
16 non-interference channels \cite{6747982,Bagaa2014293,SOUA20152}. By this way, the sensors can use different channels to
transmit data without collisions in the same time slot. Hence, the number of used time slots in
the data collection can be reduced, resulting in the improvement of the latency.
In \cite{6570720}, the authors introduced an idea to using maximum the number of channels to schedule the data transmission of sensors
in one time slot; however, since the number of channels is limited,
it is possible of occurring data collisions. Eventually, it requires
a perfect scheduling to combine the time slot assignment and the channel assignment for sensors
to achieve the minimum latency of data collection in WSNs.

\section{Preliminaries}\label{ModelProb}
In this section, we first describe the network model for a WSN in
Section \ref{SysMod}. Based on the network model, the
Minimum-Latency Collision-Avoidance Multiple-Data-Aggregation
Scheduling with Multi-Channel (MLCAMDAS-MC) problem and its
difficulty are proposed and discussed in Section \ref{scheduling}.

\subsection{Network Model} \label{SysMod}
The WSN is composed of sensors, where a sensor can communicate with
other sensor if and only if they are within each other's
transmission range. Hereafter, a sensor $u$ is said to be a sensor $v$'s neighboring sensor
if and only if sensors $u$ and $v$ can communicate with each other.
In this paper, the unit disk graph model is
employed as the communication model \cite{clark1990unit}, in which
all sensors are assumed to have the same transmission range, denoted
by $R_t$. Because sensors are responsible for periodically sensing
environmental information, sensing data are periodically generated
from sensors and reported to a sink, where the sink is a special
node in the network and is responsible for data collecting,
processing, and analysis. The wireless sensor network can then be
represented as a connected weighted graph $G$ $=$ $(V_G,E_G,\rho_G)$
\cite{8051077}, where $V_G$ is the set of sensors in the network, edge $(u,v)
\in E_G$ represents that sensors $u$ and $v$ can communicate with
each other, $\rho_G(v)$, the weighting function of sensor $v \in
V_G$, represents that the number of units of raw data generated by
$v$ to report to the sink $s \in V_G$ per a period of time. Fig.
\ref{Fig:net_model_org} shows a WSN represented by a connected
weighted graph $G = (V_G,E_G,\rho_G)$, which includes one sink
$s$ and $7$ sensors $a$, $b$, $c$, $d$, $e$, $f$, and $g$. The
numbers of units of raw data generated by $a$, $b$, $c$, $d$, $e$,
$f$, and $g$ per a period of time are $7$, $4$, $4$,
$2$, $1$, $1$, and $1$, respectively.

\begin{figure}
\centering \subfigure{\includegraphics[width=5cm]{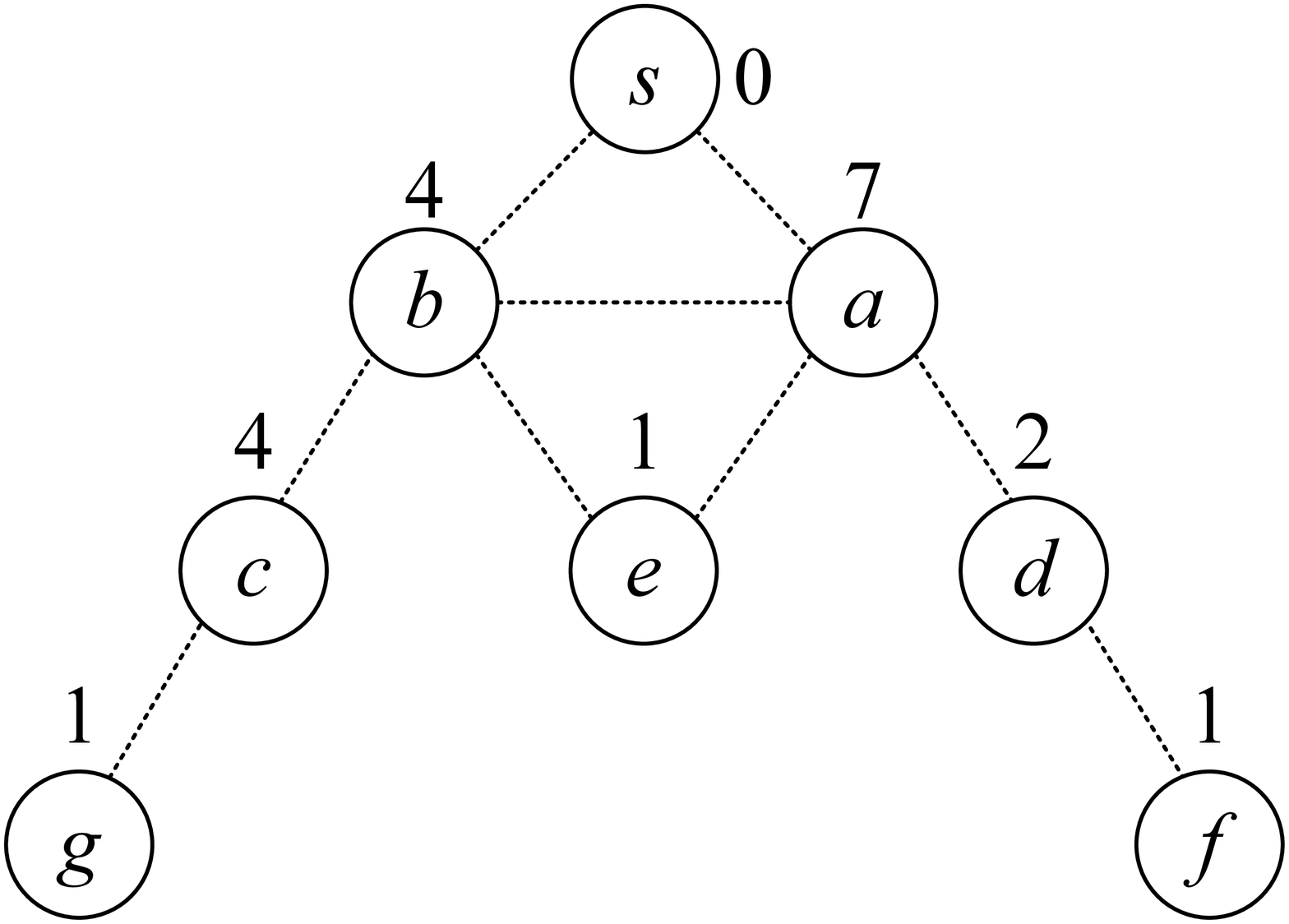}}\\
\subfigure{\includegraphics[width=2.5cm]{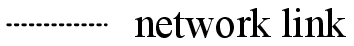}}
\caption{Example of a connected weighted graph $G =
(V_G,E_G,\rho_G)$, where node $s$ is a sink, and the number close to
each node represents the number of units of raw data generated by
the corresponding node.}\label{Fig:net_model_org}
\end{figure}

In the WSN, because the transmission range $R_t$ is limited, sensors
are often hard to communicate with the sink directly, and the data
generated from sensors often have to be forwarded to the sink via
multiple sensors. When raw data are allowed to be aggregated into
packets, it may have fewer packets required to be forwarded, and
thus, reporting data to the sink becomes more efficient. Here,
by the data aggregation model in \cite{8051077}, we assume that at most
$\alpha \in {{\mathbb{Z}}^{+}}$ units of raw data are allowed to be
aggregated into one unit-size packet, where $\alpha$ is also called
aggregation ratio in this paper. Let $\phi(v)$ denote the number of
units of raw data that are required to be forwarded by sensor $v$.
The number of unit-size packets required to be forwarded by $v$,
denoted by $\delta(v)$, is defined as follows:
\begin{equation}\label{eq:pt0}
\delta (v)=\left\lceil \frac{\phi (v)}{\alpha } \right\rceil.
\end{equation}
Take Fig. \ref{Fig:net_model_org}, for example. It is clear that
sensors $b$ and $c$ generate $4$ and $4$ units of raw data,
respectively, in the beginning of a time period, and therefore,
$\phi(b)$ $=$ $4$ and $\phi(c)$ $=$ $4$ initially. When the
aggregation ratio $\alpha$ is assumed to be $3$, we have that
$\delta(b)$ $=$ $\left\lceil \frac{4}{3} \right\rceil$ $=$ $2$ and
$\delta(c)$ $=$ $\left\lceil \frac{4}{3} \right\rceil$ $=$ $2$. If $c$
forwards a unit-size packet that aggregates three units of raw data
to $b$, $b$ will have $4$ $+$ $3$ $=$ $7$ units of raw data required
to be forwarded, and we have that $\phi(b)$ $=$ $7$ and $\delta(b)$
$=$ $\left\lceil \frac{7}{3} \right\rceil$ $=$ $3$. In addition, we
also have that $\phi(c)$ $=$ $4$ $-$ $3$ $=$ $1$ and $\delta(c)$ $=$
$\left\lceil \frac{1}{3} \right\rceil$ $=$ $1$.

In the WSN, by the time division multiple access (TDMA), we assume
that sensors are synchronized, and time is divided into time slots
such that a unit-size packet can be transmitted successfully from
one sensor to its neighboring sensor within one time slot if no data
collision occurs \cite{6195676,Ergen2010,6686433}. In these studies, each sensor is assumed to
have a multi-channel half-duplex transceiver such that each
sensor can switch channels and use one of the channels to transmit
data to another sensor using the same channel at a given time slot.
In addition, each sensor cannot transmit and receive data
simultaneously, and cannot receive data from multiple sensors in the
same time slot. In the WSN, when multiple sensors use the same
channel to transmit data at the same time slot, data collisions may
occur such that the data have to be resent by using one or more time
slots. Here, a data collision will occur at one sensor if the sensor
attempts to transmit two or more packets to multiple sensors, to
receive two or more packets from multiple sensors, to transmit and
receive packets, or to receive one packet and hears another
one at the same time slot
\cite{Ergen:2010:TSA:1824986.1825016,7046421}, which is formally
defined in Definition \ref{def:collision}.

\begin{defi}\label{def:collision}
A data collision is said to be occurred at node $u$ using channel
$ch$ at time slot $t$ if one of the following conditions satisfied: (C1) $u$
transmits two or more packets to sensors at $t$, (C2)
$u$ receives two or more packets from sensors at $t$,
(C3) $u$ transmits and receives packets at $t$, and (C4)
$u$ receives a packet from a sensor $v$ and hears another one from
sensor $w$ at $t$, where $v$ and $w$ use the same channel $ch$.
\end{defi}

Take Fig. \ref{fig_collision}, for example. Figs.
\ref{fig_collision1} and \ref{fig_collision2} show two examples of
data collisions occurred at node $b$. In Fig. \ref{fig_collision1},
when node $b$ receives data from node $a$ and sends data to node $c$
at the same time slot, by the C3 in Definition \ref{def:collision},
there is a data collision occurred at node $b$. In Fig.
\ref{fig_collision2}, when nodes $a$ and $c$ use the same channel to
send data to nodes $b$ and $d$, respectively, at the same time slot,
because $b$ is within the transmission range of $c$, $b$ will hear
the data transmission from $c$, which incurs a data collision by the
C4 in Definition \ref{def:collision}. Note that if $a$ and $c$ use
different channels to send data to nodes $b$ and $d$, respectively,
because the channel of $b$ is the same with that of $a$ for
receiving data from $a$, the channel of $b$ is different from that
of $c$, and thus, $b$ will not hear the data transmission from $c$
and no collision occurs.

\begin{figure}
\center
\subfigure[]{\includegraphics[width=4.7cm]{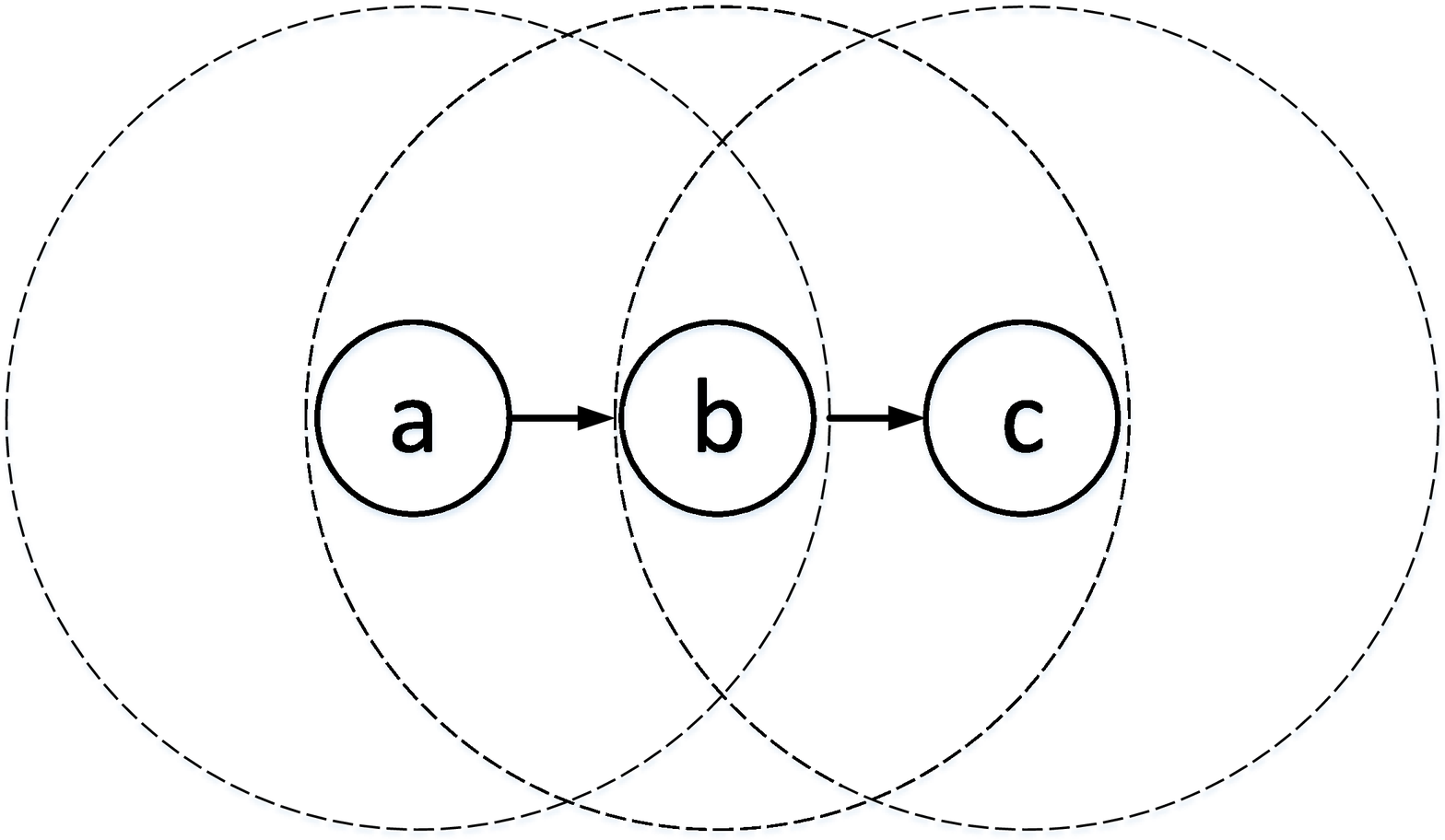}\label{fig_collision1}}
\subfigure[]{\includegraphics[width=5.5cm]{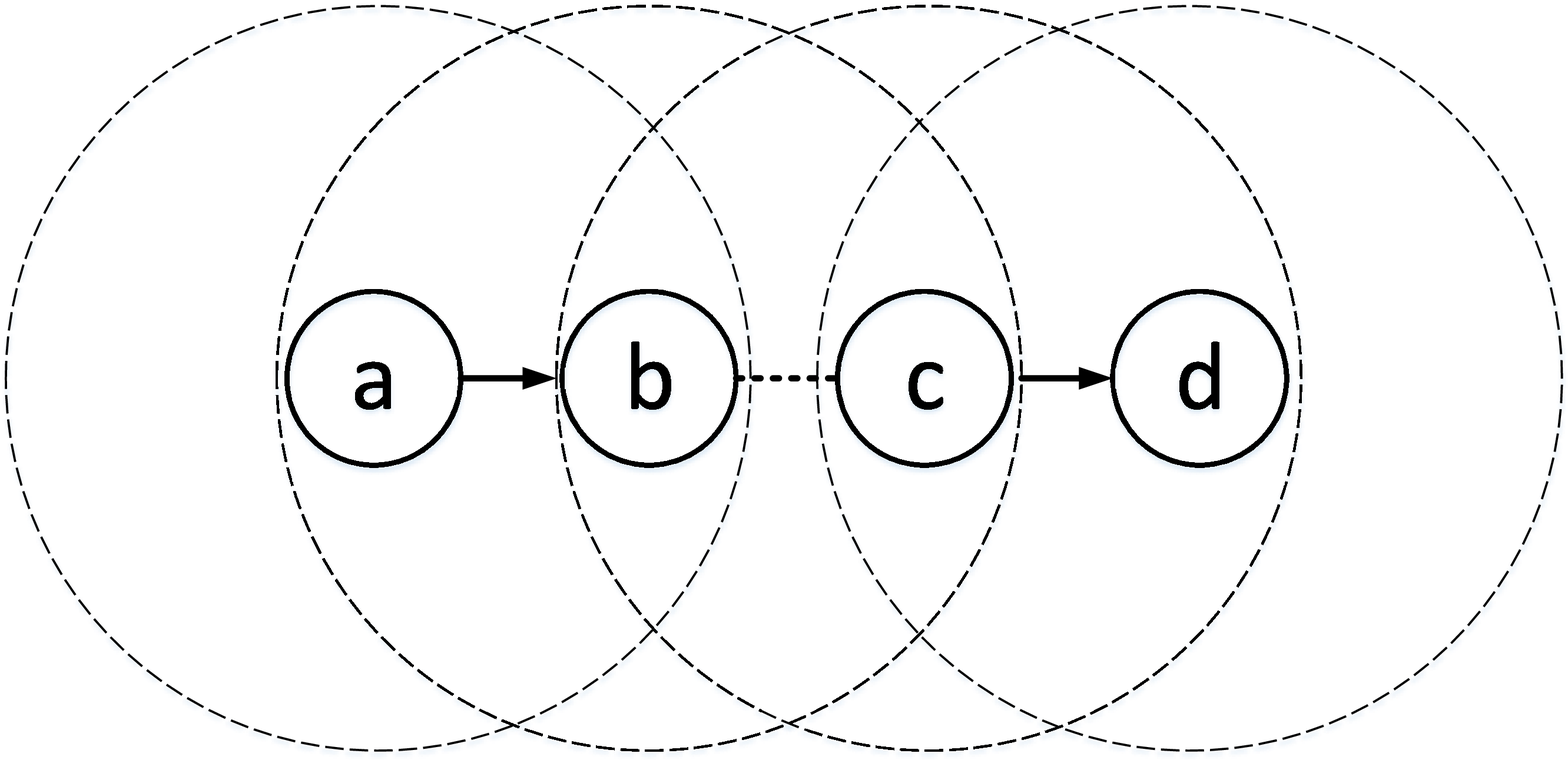}\label{fig_collision2}}
\caption{Examples of data collisions that occur at node $b$. (a)
shows the data collision when $b$ transmits and receives messages at
the same time. (b) shows the data collision when $b$ receives one
message from node $a$ and hears another message from node $c$ at the
same time.} \label{fig_collision}
\end{figure}

\subsection{Minimum-Latency Collision-Avoidance Multiple-Data-Aggregation Scheduling with Multi-Channel Problem and Its Difficulty} \label{scheduling}
In this paper, while given a WSN $G = (V_G,E_G,\rho_G)$, in which
sensors have $\varsigma$ channels to switch, and an aggregation ratio
$\alpha$, our problem is to find a schedule of forwarding data to
the sink without data collisions such that the number of required
time slots is minimized, termed the Minimum-Latency
Collision-Avoidance Multiple-Data-Aggregation Scheduling with
Multi-Channel (MLCAMDAS-MC) problem. Here, the schedule of
forwarding data is a sequence of collision-avoidance schedules
$S_1$, $S_2$, $\ldots$, $S_\ell$, where $S_i$ ($1 \leq i \leq \ell$)
is a set of 3-tuple elements ($u \rightarrow v$, $\gamma$, $ch_j$)
with $u$,$v \in V_G$, $0 < \gamma \leq \alpha$, and $0 < j \leq
\varsigma$, and represents a schedule of unit-size packets at time slot
$i$ without data collisions. In addition, each element ($u
\rightarrow v$, $\gamma$, $ch_j$) $\in S_i$ denotes that one
unit-size packet that aggregates $\gamma$ units of raw data is
scheduled to be forwarded from $u$ to $v$ by using the $j$-th
channel. The MLCAMDAS-MC problem is formally illustrated as follows:

INSTANCE: Given a graph $G = (V_G, E_G, \rho_G)$, an aggregation
ratio $\alpha \in {{\mathbb{Z}}^{+}}$, the total number of channels
$\varsigma$, and a constant $k \in {{\mathbb{Z}}^{+}}$.

QUESTION: Does there exist a schedule of forwarding data, that is, a
sequence of collision-avoidance schedules $S_1$, $S_2$, $\ldots$,
$S_\ell$, for forwarding all generated data to the sink, such that
the number of required time slots $\ell$ is not greater than $k$?


Take Fig. \ref{Fig:net_model_org}, for example. We assume that the
aggregation ratio $\alpha$ = $3$ and the total number of channels
$\varsigma$ $=$ $2$. Also assume that the sequence of
collision-avoidance schedules are $S_1$, $S_2$, $S_3$, $S_4$, $S_5$,
$S_6$, $S_7$, and $S_8$, where ${{S}_{1}}$ $=$ $\{ (c \to b, 3,
ch_1)$, $(f \to d, 1, ch_1)$, $(a \to s, 3, ch_2) \}$, ${{S}_{2}}$
$=$ $\{ (g \to c, 1, ch_1)$, $(b \to s, 3, ch_2)$, $(e \to a, 1,
ch_1) \}$, ${{S}_{3}}$ $=$ $\{ (c \to b, 2, ch_1)$, $(d \to a, 3,
ch_1) \}$, ${{S}_{4}}$ $=$ $\{ (a \to s, 3, ch_1) \}$, ${{S}_{5}}$
$=$ $\{ (b \to s, 3, ch_1) \}$, ${{S}_{6}}$ $=$ $\{ (a \to s, 3,
ch_1) \}$, ${{S}_{7}}$ $=$ $\{ (b \to s, 3, ch_1) \}$, and
${{S}_{8}}$ $=$ $\{ (a \to s, 2, ch_1) \}$. Note that all generated
data can be sent to the sink $s$ within $8$ time slots without any
data collisions.

The difficulty of the MLCAMDAS-MC problem is provided in Theorem
\ref{thm:hardness}.

\begin{thm}\label{thm:hardness}
The MLCAMDAS-MC problem is NP-complete.
\end{thm}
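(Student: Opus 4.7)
The plan is to show MLCAMDAS-MC is in NP and then reduce a known NP-complete problem to it.

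For NP membership, I would take as a certificate the sequence $S_1, S_2, \ldots, S_\ell$ with $\ell \leq k$, whose size is polynomial in the input. The verifier performs three passes. First, it checks that every 3-tuple $(u \to v, \gamma, ch_j)$ obeys $\gamma \leq \alpha$ and $1 \leq j \leq \varsigma$ and that $(u,v) \in E_G$. Second, for each time slot $i$ it scans all tuples in $S_i$ and tests conditions C1--C4 of Definition \ref{def:collision} pairwise per sensor and per channel, which is polynomial in $|V_G|$, $\varsigma$, and $\ell$. Third, it simulates the schedule by iteratively updating $\phi$ and $\delta$ via equation \eqref{eq:pt0} and confirms that, after all $\ell$ slots, every unit of raw data has been delivered to $s$. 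All three passes run in time polynomial in the input, so MLCAMDAS-MC $\in$ NP.

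For NP-hardness, I would reduce from the single-channel version MLCAMDAS, which is established as NP-complete in \cite{8051077}. Given an arbitrary instance $(G, \alpha, k)$ of MLCAMDAS, I construct the MLCAMDAS-MC instance $(G, \alpha, \varsigma = 1, k)$ in constant time; the reduction is trivially polynomial. The correctness hinges on showing that when $\varsigma = 1$ every feasible tuple is forced to carry the single channel $ch_1$, so conditions C1--C3 of Definition \ref{def:collision} are identical to the single-channel collision rules, and condition C4's ``same channel $ch$'' clause is automatically satisfied because no other channel exists. Hence any collision-avoidance schedule of length at most $k$ for the MLCAMDAS-MC instance is also a valid MLCAMDAS schedule of the same length, and vice versa by appending $ch_1$ to each MLCAMDAS tuple. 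This gives the YES/YES correspondence needed for the reduction.

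The main obstacle I anticipate is the second step: making sure the collision definition under $\varsigma = 1$ genuinely collapses to the collision definition used in \cite{8051077}, in particular that the hearing-based rule C4 is neither strengthened nor weakened by the channel qualifier when only one channel is available. Once this matchup is verified, the NP-hardness follows immediately from the hardness of MLCAMDAS, and combined with the NP membership argument this yields NP-completeness of MLCAMDAS-MC.
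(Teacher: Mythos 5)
Your proposal is correct and follows essentially the same route as the paper: membership in NP via the schedule as a polynomial-size certificate, and NP-hardness by restricting to $\varsigma = 1$ so that MLCAMDAS-MC collapses to the MLCAMDAS problem of \cite{8051077}. You simply spell out the certificate verification and the C1--C4 matching in more detail than the paper, which treats both points as immediate.
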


\begin{proof}
It is clear that the MLCAMDAS-MC problem belongs to the NP class. It
suffices to show that the MLCAMDAS-MC problem is NP-hard. Here, the
Minimum-Latency Collision-Avoidance Multiple-Data-Aggregation
Scheduling (MLCAMDAS) problem \cite{8051077} is used to show the difficulty
of the MLCAMDAS-MC problem. The MLCAMDAS problem is formally
illustrated as follows:

INSTANCE: Given a graph $G = (V_G, E_G, \rho_G)$, an aggregation
ratio $\alpha \in {{\mathbb{Z}}^{+}}$, and a constant $k \in
{{\mathbb{Z}}^{+}}$.

QUESTION: Does there exist a sequence of collision-avoidance
schedules $S_1$, $S_2$, $\ldots$, $S_\ell$ for forwarding all
generated data to the sink such that the number of required time
slots $\ell$ is not greater than $k$?

It is clear that when the total number of channels $\varsigma$ $=$ $1$,
the MLCAMDAS-MC problem is equivalent to the MLCAMDAS problem, which
implies that the MLCAMDAS problem is a subproblem of the MLCAMDAS-MC
problem. Because the MLCAMDAS problem is NP-hard \cite{8051077}, and
therefore, the MLCAMDAS-MC problem is also NP-hard. This thus
completes the proof.
\end{proof}

\section{Extended Relative Collision Graph}\label{relative}
Because the MLCAMDAS-MC problem is to find a schedule of forwarding data to
the sink without data collisions, determining collision relation between any data transmission
is important to the MLCAMDAS-MC problem. To discover the collision relation, the concept of the relative collision graph
is borrowed from the research in \cite{8051077} and is extended here
to represent the collision relation between any data transmission for the MLCAMDAS-MC problem.
When each sensor in the WSN has exactly one channel, to minimize the
number of required time slots for collecting the generated data,
all possible data transmissions that forward data to nodes closer to the sink are considered to construct
the relative collision graph \cite{8051077}. For this purpose,
a directed graph $G_\zeta = (V_{G_\zeta}, E_{G_\zeta})$, termed the data-forwarding graph,
is used to represent all such possible data transmissions in the WSN
$G = (V_G, E_G, \rho_G)$, where $V_{G_\zeta}$ $=$ $V_G$, an edge $(u,v)$ is included in
$E_{G_\zeta}$ if $(u,v) \in E_G$ and $hop(u) > hop(v)$, and $hop(u)$ (or $hop(v)$) denotes the minimum hop count from node $u$ (or $v$) to the sink in
$G$. Take Fig.
\ref{Fig:previous:forwarding_graph}, for example. Fig. \ref{Fig:previous:forwarding_graph} illustrates
the data-forwarding graph $G_\zeta$ of the WSN $G$ shown in Fig. \ref{Fig:net_model_org}.
Because $hop(e)$ $=$ $2$ $>$ $hop(a)$ $=$ $hop(b)$ $=$ $1$ and edges $(e,b)$, $(e,a)$ $\in E_G$ in Fig. \ref{Fig:net_model_org}, edges $(e,a)$ and $(e,b)$ are included
in $E_{G_\zeta}$. In addition, although edge $(a,b)$ $\in$ $E_G$ in Fig. \ref{Fig:net_model_org}, edge $(a,b)$ or $(b,a)$ is not included in
$E_{G_\zeta}$ because $hop(a)$ (or $hop(b)$) is not greater than $hop(b)$ (or $hop(a)$).

\begin{figure}
\centering
\subfigure[]{\includegraphics[width=4cm]{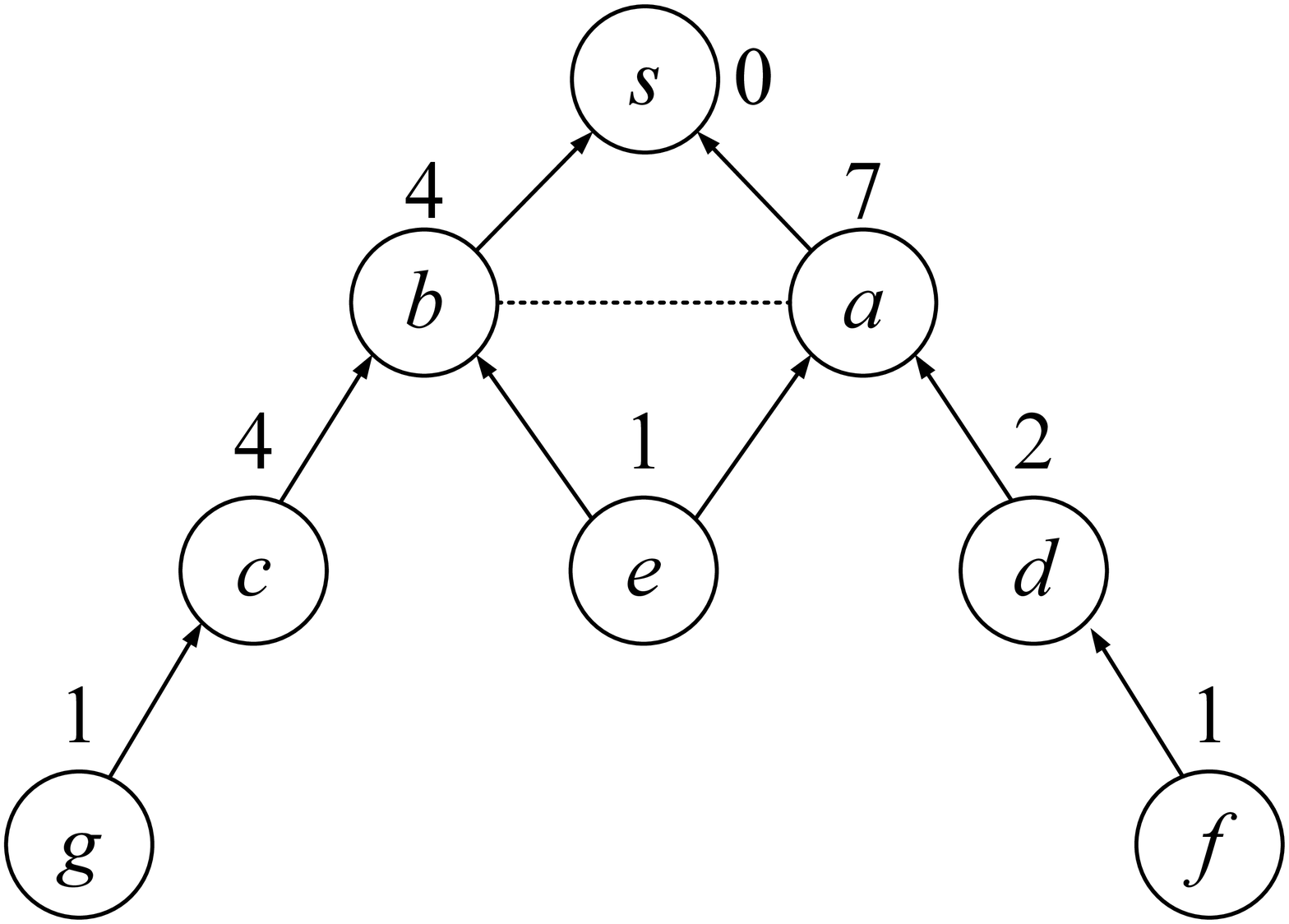}\label{Fig:previous:forwarding_graph}}
\subfigure[]{\includegraphics[width=3.5cm]{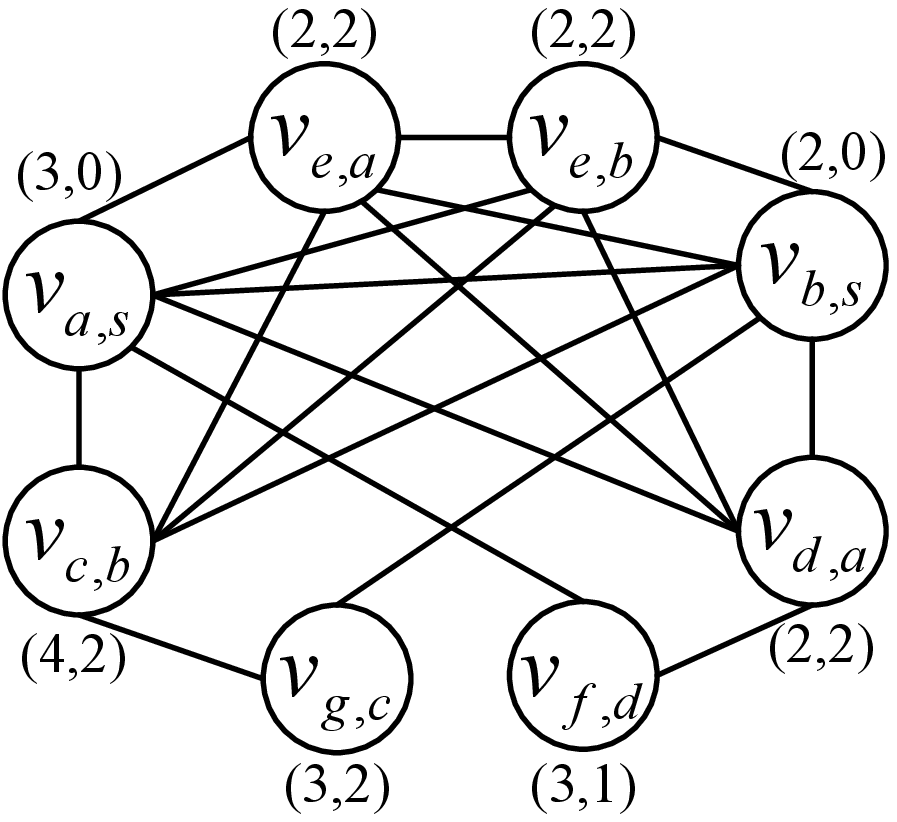}\label{Fig:previous:relative}}\\
\subfigure{\includegraphics[width=8cm]{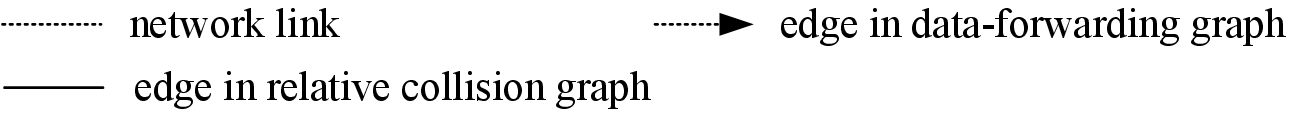}}
\caption{Examples of the data-forwarding graph $G_\zeta$
and the relative collision graph $G_r$ of the WSN $G$ shown in Fig. \ref{Fig:net_model_org}, where the aggregation ratio $\alpha$ is assumed to be
$3$.
(a) $G$'s corresponding data-forwarding graph $G_\zeta = (V_{G_\zeta}, E_{G_\zeta})$. (b) $G$'s corresponding
relative collision graph $G_r = (V_{G_r},
E_{G_r}, \omega_{G_r}, \eta_{G_r})$, where the left number and the
right number in parentheses close to each node $v \in V_{G_r}$
represents $\omega_{G_r}(v)$ and $\eta_{G_r}(v)$, respectively.
}\label{Fig:previouswork}
\end{figure}

By the WSN $G$ and its corresponding data-forwarding graph $G_\zeta$,
the corresponding relative collision graph $G_r = (V_{G_r}, E_{G_r},
\omega_{G_r}, \eta_{G_r})$ can be constructed
to illustrate the collision relation between any data transmission in \cite{8051077}.
In $G_r$, each node $v_{x,y} \in V_{G_r}$ represents a possible data transmission from node $x$ to node $y$
in $G_\zeta$, that is, an edge $(x,y) \in G_\zeta$. Each edge $(v_{x,y}, v_{z,w}) \in
E_{G_r}$ represents that a data collision occurs if data are sent from node $x$ to node $y$ and from node
$z$ to node $w$ at the same time slot and in the same channel.
In addition, for each $v_{x,y} \in V_{G_r}$, $\omega_{G_r}(v_{x,y})$ calculated by $\delta(x) \times hop(x)$
is used to represent the weight of node $v_{x,y}$; and $\eta_{G_r}(v_{x,y})$ calculated by $\alpha$ $\times$ $\left\lceil
\frac{{\phi(y)}}{\alpha } \right\rceil$ $-$ $\phi(y)$ is used to represent
how many extra units of raw data can be aggregated
with the data at $y$ such that the total number of aggregated
unit-size packets in $y$ is not increased. When $G$ and $G_\zeta$ are given,
the corresponding $G_r = (V_{G_r}, E_{G_r}, \omega_{G_r}, \eta_{G_r})$
is constructed as follows: $V_{G_r}$ is the set of nodes $v_{x,y}$ for
each $(x,y) \in E_{G_\zeta}$; $E_{G_r}$ is the set of edges
$(v_{x,y},v_{z,w})$ if edges $(x,y), (z,w) \in E_{G_\zeta}$ and a data collision occurs under the condition that data are transmitted from node $x$
to node $y$ and from node $z$ to node $w$ at the same time slot and in the same channel; and
$\omega_{G_r}(v_{x,y})$ $=$ $\delta(x) \times hop(x)$ and $\eta_{G_r}(v_{x,y})$ $=$ $\alpha$ $\times$ $\left\lceil
\frac{{\phi(y)}}{\alpha } \right\rceil$ $-$ $\phi(y)$ for each
$v_{x,y} \in V_{G_r}$. Take Fig. \ref{Fig:previous:relative}, for example.
When given the WSN $G$, as shown in Fig. \ref{Fig:net_model_org}, and the corresponding
data-forwarding graph $G_\zeta$, as shown in Fig.
\ref{Fig:previous:forwarding_graph}, when the aggregation ratio $\alpha$ $=$
$3$, the corresponding relative collision graph $G_r = (V_{G_r}, E_{G_r},
\omega_{G_r}, \eta_{G_r})$ is shown in Fig. \ref{Fig:previous:relative}.
Note that nodes $v_{e,a}$
and $v_{e,b}$ are included in $V_{G_r}$ because edges $(e,a)$, $(e,b)$ $\in$ $E_{G_\zeta}$.
In addition, edge $(v_{e,a}, v_{e,b})$ $\in$ $E_{G_r}$ because a data collision occurs
by the C1 in Definition \ref{def:collision} when data are transmitted from node $e$
to node $a$ and from node $e$ to node $b$ in the same time slot. Because $e.\delta$ $=$ $\left\lceil \frac{{\phi(e)}}{\alpha}
\right\rceil$ $=$ $\left\lceil \frac{{1}}{3} \right\rceil$ $=$ $1$
and $e.hop = 2$, we have that $\omega_{G_r}(v_{e,a})$ $=$
$\omega_{G_r}(v_{e,b})$ $=$ $1 \times 2$ $=$ $2$,
$\eta_{G_r}(v_{e,a})$ $=$ $\alpha$ $\left\lceil
\frac{{\phi(a)}}{\alpha } \right\rceil$ $-$ $\phi(a)$ $=$ $3 \left\lceil
\frac{{7}}{3} \right\rceil$ $-$ $7$ $=$ $2$, and
$\eta_{G_r}(v_{e,b})$ $=$ $\alpha$ $\left\lceil
\frac{{\phi(b)}}{\alpha } \right\rceil$ $-$ $\phi(b)$ $=$ $3 \left\lceil
\frac{{4}}{3} \right\rceil$ $-$ $4$ $=$ $2$.

Because the relative collision graph only considers exactly one channel used by every sensor in the WSN,
the extended relative collision graph is therefore presented to represent collision relation with multiple channels
for the MLCAMDAS-MC problem. When sensors have $\varsigma$ channels to switch, each sensor in the WSN
can use the $i$-th ($1$ $\leq$ $i$ $\leq$ $\varsigma$) channel for data transmission.
When all sensors in the WSN use the same $i$-th ($1$ $\leq$ $i$ $\leq$ $\varsigma$) channel,
the collision relation for all possible data transmission is the same as that in the relative collision graph.
In addition, because sensors in the WSN can select one of $\varsigma$ channels for data transmission,
the collision relation between any data transmission using different channels has to be considered in the extended
relative collision graph. By the observations, the extended relative collision graph is constructed
by including $\varsigma$ relative collision graphs each representing the collision relation for all possible data transmission
with channel $i$ ($1$ $\leq$ $i$ $\leq$ $\varsigma$). In addition, some edges are inserted between nodes in different
relative collision graphs to represent the collision relation for the data transmission using different channels.
When $G$ and $G_\zeta$ are given,
the corresponding extended relative collision graph $G_{r+} = (V_{G_{r+}}, E_{G_{r+}}, \omega_{G_{r+}}, \eta_{G_{r+}})$
is constructed as follows:  $V_{G_{r+}}$ is the set of nodes $v_{x,y}^i$ for
each $(x,y) \in E_{G_\zeta}$ and $1$ $\leq$ $i$ $\leq$ $\varsigma$; $E_{G_{r+}}$ is the set of edges
$(v_{x,y}^i,v_{z,w}^j)$ if edges $(x,y), (z,w) \in E_{G_\zeta}$ and a data collision occurs under the condition that data are transmitted from node $x$ to node $y$ by using channel $i$ and from node $z$ to node $w$ by using channel $j$ at the same time slot;
and $\omega_{G_{r+}}(v_{x,y}^i)$ $=$ $\delta(x) \times hop(x)$ and $\eta_{G_{r+}}(v_{x,y}^i)$ $=$ $\alpha$ $\times$ $\left\lceil
\frac{{\phi(y)}}{\alpha } \right\rceil$ $-$ $\phi(y)$ for each
$v_{x,y}^i \in V_{G_{r+}}$. Note that $\omega_{G_{r+}}$ and  $\eta_{G_{r+}}$ in $G_{r+}$ have the same definition as that in
$G_{r}$.

\begin{figure}
\centering
\subfigure{\includegraphics[width=4.5cm]{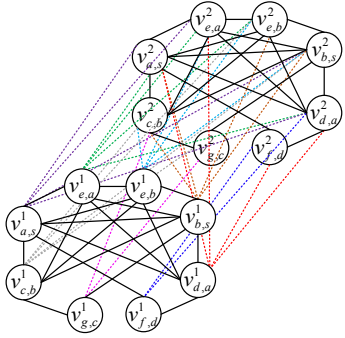}}
\caption{Example of the extended relative collision graph $G_{r+}$ $=$ $(V_{G_{r+}},
E_{G_{r+}}, \omega_{G_{r+}}, \eta_{G_{r+}})$ of the WSN $G$ shown in Fig. \ref{Fig:net_model_org}, where the aggregation ratio $\alpha$ and the number of channels $\varsigma$ are assumed to be $3$ and $2$, respectively.
}
\label{fig:example:ERCG}
\end{figure}

Take Fig. \ref{fig:example:ERCG}, for example.
When the WSN $G$ shown in Fig. \ref{Fig:net_model_org},
the corresponding data-forwarding graph $G_\zeta$ shown in Fig.
\ref{Fig:previous:forwarding_graph}, the aggregation ratio $\alpha$ $=$
$3$, and the number of channels $\varsigma$ $=$ $2$ are given,
the corresponding extended relative collision graph $G_{r+} = (V_{G_{r+}}, E_{G_{r+}},
\omega_{G_{r+}}, \eta_{G_{r+}})$ is shown in Fig. \ref{fig:example:ERCG}.
Because the number of channels $\varsigma$ $=$ $2$, it is clear that the $G_{r+}$ in Fig. \ref{fig:example:ERCG}
consists of two relative collision graphs that each are the same with the $G_r$ as shown in Fig. \ref{Fig:previous:relative}.
In addition, edge $(v_{e,a}^{1}, v_{e,b}^{2})$ $\in$ $E_{G_{r+}}$ because a data collision occurs at node $e$
by the C1 in Definition \ref{def:collision} when $e$ transmits packets to node $a$ and $b$ at the same time slot.
Edge $(v_{e,a}^{1}, v_{d,a}^{2})$ $\in$ $E_{G_{r+}}$ because a data collision occurs at node $a$
by the C2 in Definition \ref{def:collision} when $a$ receives packets from $e$ and $d$ at the same time slot.
Edge $(v_{e,a}^{1}, v_{e,a}^{2})$ $\in$ $E_{G_{r+}}$ because data collisions occur at nodes $e$ and $a$
by the C1 and C2 in Definition \ref{def:collision}.
Edge $(v_{e,a}^{1}, v_{a,s}^{2})$ $\in$ $E_{G_{r+}}$ because a data collision occurs at node $a$
by the C3 in Definition \ref{def:collision} when $a$ receives a packet from $e$ and transmit another one to $s$
at the same time slot.
Moreover, note that although $(v_{e,a}^{1}, v_{b,s}^{1})$ $\in$ $E_{G_{r+}}$ by the C4 in Definition \ref{def:collision},
that is, a data collision occurs at node $a$ when $a$ receives a packet from node $e$ and hears another one from
node $b$ by using the first channel at the same time slot,
$(v_{e,a}^{1}, v_{b,s}^{2})$ $\notin$ $E_{G_{r+}}$ because $a$ and $b$ use different channels to receive and transmit data, respectively.

\section{Distributed Collision-Avoidance Scheduling (DCAS) Algorithm} \label{sec:proposed_algorithm}
By the extended relative collision graph $G_{r+} = (V_{G_{r+}}, E_{G_{r+}}, \omega_{G_{r+}}, \eta_{G_{r+}})$,
edge $(v_{x,y}^i, v_{z,w}^j) \notin E_{G_{r+}}$ ($1$ $\leq$ $i$, $j$ $\leq$ $\varsigma$) represents that a data collision will not occur if data are transmitted from node $x$ to node $y$ with the $i$-th channel and from node
$z$ to node $w$ with the $j$-th channel at the same time slot. This implies that
if $IS$ is an independent set in $G_{r+}$, that is, $IS \subseteq V_{G_{r+}}$ and edge $(u,v) \notin E_{G_{r+}}$ for any $u,v \in IS$,
there is no data collision when data are transmitted from node $x$ to node $y$ with the $i$-th channel for all $v_{x,y}^i \in IS$ at the same time slot. Therefore, to avoid data collisions,
the idea is to select a suitable independent set from the extended relative collision graph $G_{r+}$ for each time slot. Take Fig. \ref{fig:example:ERCG}, for example. Let $IS$ $=$ $\{v_{c,b}^1$, $v_{f,d}^1$, $v_{a,s}^2 \}$. Clearly,
$IS$ is an independent set in $G_{r+}$ shown in Fig. \ref{fig:example:ERCG}.
It is also clear that no data collision will occur in Fig. \ref{Fig:net_model_org} when data are transmitted from node $c$ to node $b$ with the first channel,
from node $f$ to node $d$ with the first channel, and from node $a$ to node $s$ with the second channel at the same time slot.

To find a suitable independent set from the extended relative collision graph $G_{r+}$ for each time slot,
the idea is to find an independent set $IS$ in $G_{r+}$ that is composed of the nodes $v_{x,y}^i$ with higher
$\omega_{G_r}(v_{x,y}^i)$ such that the nodes $v_{x,y}^i$ with higher number of unit-size packets required to be forwarded by $x$, that is, $\delta(x)$, or higher minimum hop count from $x$ to the sink, that is, $hop(x)$,
can be selected to minimize the total required time slots,
where $\omega_{G_{r+}}(v_{x,y}^i)$ $=$ $\delta(x) \times hop(x)$.
To select suitable nodes to form an independent set in $G_{r+}$,
the precedence of nodes, which is used to decide which node has precedence to be
selected into the independent set, has to be determined first.
Here, for any two nodes $v_{x,y}^i$ and $v_{z,w}^j$ in $G_{r+}$,
$v_{x,y}^i$ is said to have higher precedence than $v_{z,w}^j$ if $\omega_{G_{r+}}(v_{x,y}^i)$ $>$
$\omega_{G_{r+}}(v_{z,w}^j)$; otherwise, if $\omega_{G_{r+}}(v_{x,y}^i)$ is equal to
$\omega_{G_{r+}}(v_{z,w}^j)$, the node with higher $\eta_{G_{r+}}$ value
has higher precedence because at most $\eta_{G_{r+}}(v_{x,y}^i)$ (or $\eta_{G_{r+}}(v_{z,w}^j)$) units of raw data can be aggregated at $y$ (or $w$) without increasing $\delta(y)$ (or $\delta(w)$); otherwise, if $\eta_{G_{r+}}(v_{x,y}^i)$
is equal to $\eta_{G_{r+}}(v_{z,w}^j)$, the node with smaller channel number has higher precedence;
otherwise, if channel $i$ is equal to channel $j$, the node with higher ID value ($ID(v_{x,y}^i)$ or $ID(v_{z,w}^j)$) has higher precedence,
where $ID(v_{a,b}^c)$ is a pair of $id(a)$ and $id(b)$, denoted by ($id(a)$, $id(b)$), for each $v_{a,b}^c \in G_{r+}$; $id(v)$
is assumed to be an unique identification for each $v \in V_G$; and ($id(a)$, $id(b)$) is said to be higher than ($id(c)$, $id(d)$)
if $id(a)$ $>$ $id(c)$, or ($id(a)$ $=$ $id(c)$ and $id(b)$ $>$ $id(d)$.
The definition of the precedence of nodes is formally defined in Definition \ref{def:precedence}.

\begin{defi}\label{def:precedence}
Given an extended relative collision graph $G_{r+} = (V_{G_{r+}}, E_{G_{r+}},
\omega_{G_{r+}}, \eta_{G_{r+}})$, node $v_{x,y}^i$ $\in$ $V_{G_r}$ is said to have higher precedence
over $v_{z,w}^j$ $\in$ $V_{G_r}$ ($v_{z,w}^j$ $\neq$ $v_{x,y}^i$) if
$(\omega_{G_{r+}}(v_{x,y}^i) > \omega_{G_{r+}}(v_{z,w}^j))$ or
$(\omega_{G_{r+}}(v_{x,y}^i) = \omega_{G_{r+}}(v_{z,w}^j)$ and $\eta_{G_{r+}}(v_{x,y}^i) > \eta_{G_{r+}}(v_{z,w}^j))$
or $(\omega_{G_{r+}}(v_{x,y}^i) = \omega_{G_{r+}}(v_{z,w}^j)$ and $\eta_{G_{r+}}(v_{x,y}^i) = \eta_{G_{r+}}(v_{z,w}^j)$ and $(i < j))$ or
$(\omega_{G_{r+}}(v_{x,y}^i) = \omega_{G_{r+}}(v_{z,w}^j)$ and $\eta_{G_{r+}}(v_{x,y}^i) = \eta_{G_{r+}}(v_{z,w}^j)$ and $(i = j)$ and $ID(v_{x,y}^i) > ID(v_{z,w}^j))$, where $ID(v_{x,y}^i)$ is a pair of $id(x)$ and $id(y)$; and $id(v)$ denotes $v$'s identification for each $v$ $\in$ $G$.
\end{defi}

To design a distributed algorithm for the MLCAMDAS-MC problem, every sensor $u$
is assumed to have the information about the minimum hop count from $u$ to the sink, that is, $hop(u)$, used
to evaluate the value of $\omega_{G_{r+}}$. It can be easily achieved by flooding a message from the sink
to all nodes in the networks based on the breadth-first-search mechanism \cite{Nguyen201699}.
In addition, every sensor in the networks is also assumed to have limited
local information. By the observation in Fig.
\ref{fig_collision2}, if node $a$ has local information about nodes $b$, $c$, and $d$, and knows that the data transmission from $c$ to $d$ with channel $j$
is a better choice than the data transmission from $a$ to $b$ with channel $i$, that is, $a$ has a local subgraph of $G_{r+}$ and
knows that $v_{c,d}^j$ has higher precedence than $v_{a,b}^i$, $a$ will let $c$ to schedule the data transmission from $c$ to $d$ with channel $j$
before $a$'s schedule.  In Fig.
\ref{fig_collision2}, we have that $a$ requires at least $3$-hop neighboring information to compare all possible data transmission that could have data collision
with the data transmission from $a$. Therefore, in this paper, we assume that every sensor in the networks maintains $3$-hop neighboring information, that is, every sensor $u$ in the networks has a local WSN $u.G$ $=$
$(u.V_G, u.E_G, \rho_G)$ and a local data-forwarding graph $u.G_\zeta$ $=$ $(u.V_{G_\zeta}, u.E_{G_\zeta})$. By $u.G$ and $u.G_\zeta$, a local extended relative collision graph $u.G_{r+}$ $=$ $(u.V_{G_{r+}}, u.E_{G_{r+}}, \omega_{G_{r+}}, \eta_{G_{r+}})$ can then be constructed. Take Fig. \ref{Fig:subgraphs}, for example.
Fig. \ref{Fig:subgraphs} shows the local information of node $f$ that is a node in $G$ shown in Fig. \ref{Fig:net_model_org}.
Fig. \ref{Fig:sub_forwarding_graph} shows $f$'s local WSN $f.G$ and local data-forwarding graph $f.G_\zeta$.
Note that node $b$ is in $f.G$ because $b$ is within $3$-hop distance from $f$.
Also note that edge $(b,s)$ is not in both of $f.G$ and $f.G_\zeta$ because $(b,s)$ is not within $3$-hop distance from $f$.
Fig. \ref{Fig:sub_ERC} shows $f$'s local extended relative collision graph $f.G_{r+}$.
Note that there are $8$ nodes in $f.G_{r+}$ $=$ $(f.V_{G_{r+}}, f.E_{G_{r+}}, \omega_{G_{r+}}, \eta_{G_{r+}})$ because the number of channels $\varsigma$ is $2$ and four edges exist in $f.G_\zeta$.
Also note that $f.G_{r+}$ is a subgraph of $G_{r+}$ induced by the nodes in $f.V_{G_{r+}}$, where $G_{r+}$ is shown in Fig. \ref{fig:example:ERCG}.

\begin{figure}
\centering
\subfigure[]{\includegraphics[width=3cm]{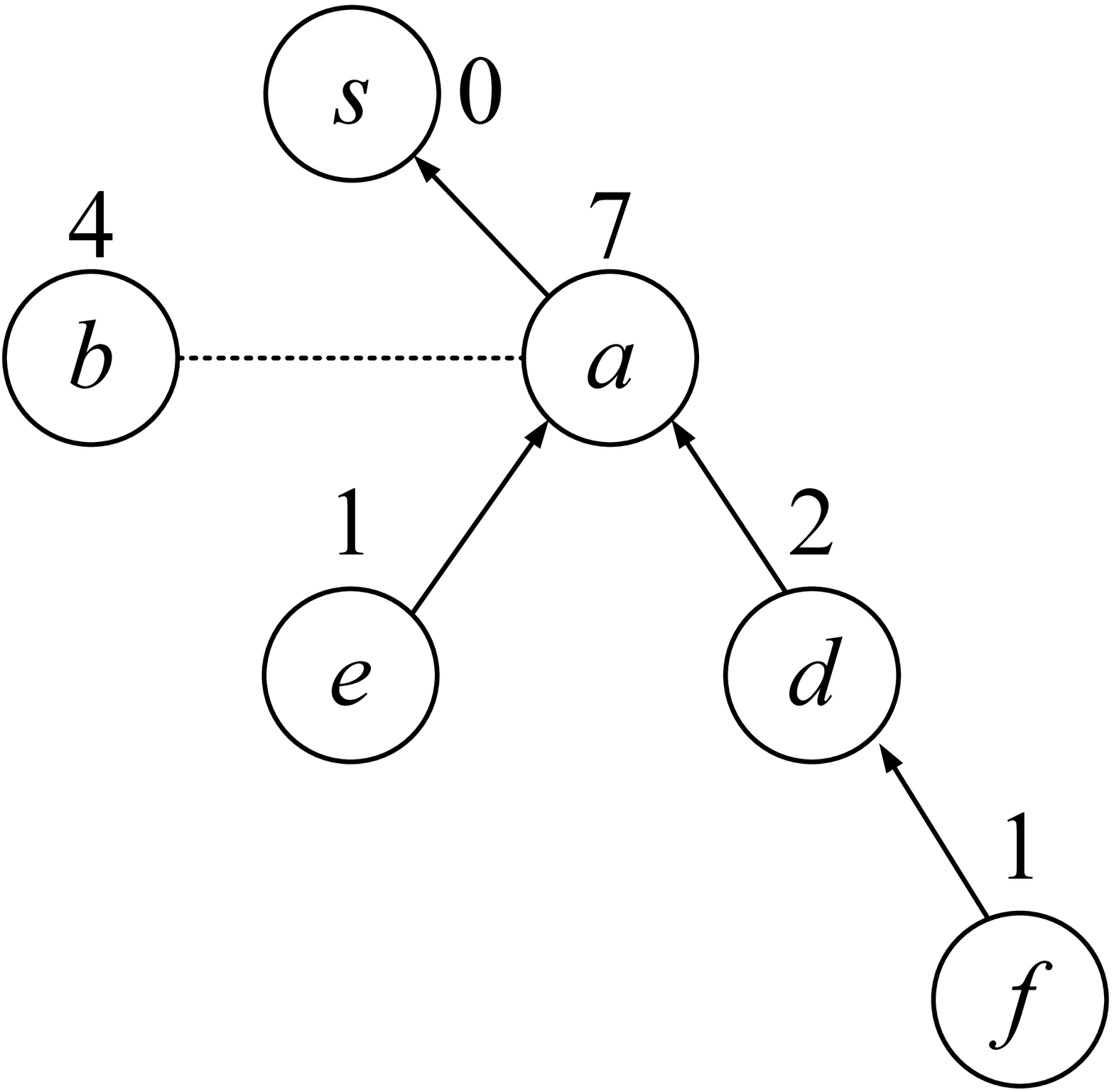}\label{Fig:sub_forwarding_graph}}
\subfigure[]{\includegraphics[width=4.5cm]{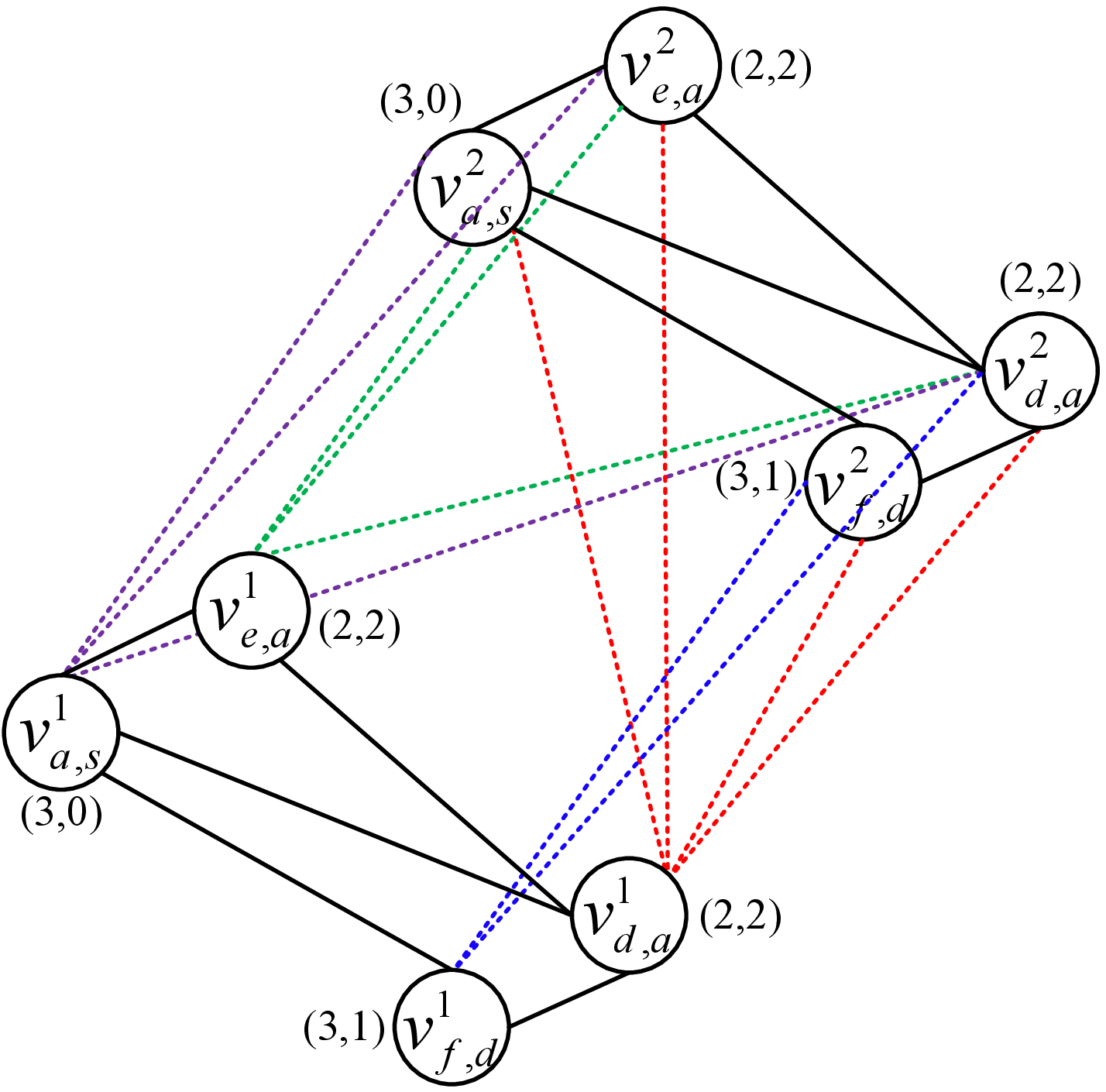}\label{Fig:sub_ERC}}\\
\caption{Examples of node $f$'s local information, including local WSN $f.G$, local data-forwarding graph $f.G_\zeta$,
and local extended relative collision graph $f.G_{r+}$, where $f$ is a node in $G$ shown in Fig. \ref{Fig:net_model_org}.
(a) The combination of $f.G$ and $f.G_\zeta$. (b) $f.G_{r+}$ $=$ $(f.V_{G_{r+}}, f.E_{G_{r+}}, \omega_{G_{r+}}, \eta_{G_{r+}})$, where the aggregation ratio $\alpha$ and the number of channels $\varsigma$ are assumed to be $3$ and $2$, respectively; and the left number and the right number in parentheses close to each node $v \in f.V_{G_{r+}}$
represents $\omega_{G_{r+}}(v)$ and $\eta_{G_{r+}}(v)$, respectively.
}\label{Fig:subgraphs}
\end{figure}

When every sensor maintains 3-hop neighboring information, the idea of the proposed distributed algorithm, termed the
distributed collision-avoidance scheduling (DCAS) algorithm, is to let every sensor $u$ iteratively schedule local data transmissions
for each time slot by its 3-hop neighboring information, until no more data are required to be scheduled by $u$.
Here, we use $u.t$ to denote which time slot waited to be scheduled by sensor $u$, and use $u.S$ to store the determined schedules.
Initially, for each sensor $u$ in the networks, $u.t$ is set to $1$, and $u.S$ is set to $\emptyset$. When some sensor $u$ makes a schedule for time slot $u.t$,
$u.t$ is incremented by $1$, which is used to represent that $u$ is ready for the next time slot.
In addition, when some sensor $v \in u.V_G$ satisfies that $v.t < u.t$, it implies that $v$ has not yet made a decision for time slot $v.t$,
and thus, $u$ cannot make any schedule due to the lack of $v$'s information at time slot $u.t$.
Therefore, $u$ can make a schedule if $u.t \le v.t$ for all $v \in u.V_G$.

When $u.t \le v.t$ for all $v \in u.V_G$, sensor $u$ is checked to see if it is allowed to make a schedule by
Procedure SCHEDULE. In Procedure SCHEDULE, the idea is that sensor $u$ can make a schedule of transmitting
data from $u$ to its neighboring sensor $y$ with channel $i$ if the data transmission from $u$ to $y$ with channel $i$,
represented by $v_{u,y}^{i}$, has higher precedence than other possible data transmissions $v_{z,w}^{j}$ in $u.V_{G_{r+}}$ by Definition \ref{def:precedence}. When $u$ can make a schedule of transmitting
data from $u$ to $y$ with channel $i$, the schedule is inserted into $u.S$, and a MSG$\_$DECISION message with
the scheduling information is locally broadcast to all sensors in $u.V_G$. After this, $u.t$ is incremented by $1$.
When other sensor receives the MSG$\_$DECISION message, it will locally update the related information.
In addition, if $y$ receives the MSG$\_$DECISION message, the scheduling information is inserted into $y.S$, and the MSG$\_$DECISION message is locally broadcast to all sensors in $y.V_G$.

To avoid data collision, the data transmissions scheduled for the same time slot have to be collision-free, that is,
the nodes $v_{x,y}^{i}$ selected for the same time slot have to form an independent set in $G_{r+}$.
To this purpose, for each sensor $u$ in the networks, all nodes in $u.V_{G_{r+}}$ are marked as white when $u$ changes $u.t$ to a new time slot.
In addition, when any node $v_{u,y}^{i}$ in $V_{G_{r+}}$ is selected by node $u$ for scheduling time slot $t$, $v_{u,y}^{i}$ and its neighboring nodes in $G_{r+}$ are marked as black in the local information of other sensors $u'$ with $u'.t = t$. When $v_{u,y}^{i}$ and its neighboring nodes in $G_{r+}$ are marked as black for time slot $t$, the node selected from the remaining white nodes in $G_{r+}$ will be independent from $v_{u,y}^{i}$ in $G_{r+}$ for time slot $t$. When one sensor $u$ knows that all nodes in $u.V_{G_{r+}}$ are black, it implies that no data transmission can be scheduled
by $u$ for the current time slot. Then, $u$ locally broadcasts a MSG$\_$SKIP message to all sensors in $u.V_G$ about that $u$ skips the current time slot and $u.t$ is incremented by $1$.

After scheduling a number of time slots,
when a sensor $u$ has $\phi(u)$ $=$ $0$ and satisfies $hop(u)$ $\geq$ $hop(v)$ for each of its neighboring sensors $v$, we have that no data will be scheduled to and forwarded by $u$, and thus, $u$ finishes its scheduling process.
In addition, when a sensor $u$ has $\phi(u)$ $=$ $0$, and each of its neighboring sensors $v$ having $hop(v) > hop (u)$ finishes the scheduling process, $u$ will also finish its scheduling process. When a sensor $u$ finishes its scheduling process, a MSG$\_$FINISHED message with $u$
is locally broadcast to all sensors in $u.V_G$.
When the sink knows that all its neighboring sensors finish the scheduling processes,
the data scheduling for the network is completed. The details of the DCAS algorithm is shown in Algorithm \ref{alg:DCAS}.

\begin{algorithm*}
\begin{algorithmic}[1]\label{alg:Schedule}
\Procedure {SCHEDULE} {$u$}

    \State Let $V$ be the set of white nodes $v_{z,w}^{j}$ in $u.V_{G_{r+}}$

    \State Let $v_{x,y}^{i}$ be the node having highest precedence in $V$

    \If {$u = x$}

        \State $\varphi$ $\gets$ $min(\phi(u), \alpha)$, where $min(a,b)$ denotes a minimum function and produces the minimum value of $a$ and $b$

        \State $u.S$ $\gets$ $u.S \bigcup \left\{ S_{u.t} \right\}$, where $S_{u.t}$ $=$ ($x
\rightarrow y$, $\varphi$, $ch_i$)

        \State $\phi(u)$ $\gets$ $\phi(u)$ $-$ $\varphi$; $\phi(y)$ $\gets$ $\phi(y)$ $+$ $\varphi$

        \State update the values of $\omega_{G_{r+}}$ and $\eta_{G_{r+}}$ for the nodes in $u.V_{G_{r+}}$

        \State locally broadcast a MSG$\_$DECISION message with $S_{u.t}$ to all sensors in $u.V_G$

        \State all nodes in $u.V_{G_{r+}}$ become white

        \State $u.t$ $\gets$ $u.t + 1$; $y.t$ $\gets$ $u.t$

    \ElsIf{$V = \emptyset$}

        \State locally broadcast a MSG$\_$SKIP message with $u$ and $u.t$ to all sensors in $u.V_G$

        \State all nodes in $u.V_{G_{r+}}$ become white

        \State $u.t$ $\gets$ $u.t + 1$

    \EndIf

\EndProcedure
\end{algorithmic}
\end{algorithm*}

\begin{algorithm*} \caption{DCAS($u$)}\label{alg:DCAS}
\begin{algorithmic}[1]

    \State $u.S \gets \emptyset$; $u.finished \gets false$; $u.t \gets 1$

    \State Construct $u.G$, $u.G_\zeta$, and $u.G_{r+}$

    \State all nodes in $u.V_{G_{r+}}$ are initialized to be white

    \State $\phi(v)$ is set to $\rho_G(v)$ for all $v \in u.V_G$

    \While {$u.finished = false$}

         \If {a MSG$\_$DECISION message with $S_{t'}$ $=$ ($x \rightarrow y$, $\varphi$, $ch_i$) is received by $u$ for the first time and $u \neq x$}

            \State $\phi(x)$ $\gets$ $\phi(x)$ $-$ $\varphi$; $\phi(y)$ $\gets$ $\phi(y)$ $+$ $\varphi$

            \State update the values of $\omega_{G_{r+}}$ and $\eta_{G_{r+}}$ for the nodes in $u.V_{G_{r+}}$

            \If {$u$ $=$ $y$}

                \State all nodes in $u.V_{G_{r+}}$ become white

                \State $u.S$ $\gets$ $u.S \bigcup \left\{ S_{t'} \right\}$

                \State locally broadcast a MSG$\_$DECISION message with $S_{t'}$ to all sensors in $u.V_G$

            \ElsIf {$u.t$ $=$ $t'$}

                \State $v_{x,y}^{i}$ and its neighboring nodes in $u.V_{G_{r+}}$ become black

            \EndIf

            \State $x.t$ $\gets$ $t'$ $+$ $1$; $y.t$ $\gets$ $x.t$

         \EndIf

         \If {a MSG$\_$SKIP message with $v$ and $t'$ is received by $u$ for the first time and $u \neq v$ }

            \If {$u.t$ $=$ $t'$}

                \State nodes $v_{z,w}^{j}$ become black for all $v_{z,w}^{j}$ $\in$ $u.V_{G_{r+}}$ with $z = v$ or $w = v$

            \EndIf

            \State $v.t$ $\gets$ $t'$ $+$ $1$

         \EndIf

         \If {a MSG$\_$FINISHED message with $v$ is received by $u$ for the first time}


            \State $u.G$ is updated as a subgraph of $u.G$ induced by $u.V_G$ $-$ $\{v\}$

            \State $u.G_{r+}$ is updated as a subgraph of $u.G_{r+}$ induced by $u.V_{G_{r+}}$ $-$ $X$, where $X$ denotes the set of nodes $v_{z,w}^{j} \in u.V_{G_{r+}}$ with $z$ $=$ $v$ or $w$ $=$ $v$

         \EndIf

         \If {$\phi(u)$ $=$ $0$ and $hop(u)$ $\geq$ $hop(v)$ for all $v \in u.V_G$}

            \State $u.finished \gets true$

            \State locally broadcast a MSG$\_$FINISHED message with $u$ to all sensors in $u.V_G$

         \ElsIf {$u.t \le v.t$ for all $v \in u.V_G$}

            \State SCHEDULE(u)

         \EndIf

    \EndWhile

    \State return $u.S$
\end{algorithmic}
\end{algorithm*}

Take Fig. \ref{Fig:net_model_org}, for example. We assume that the
aggregation ratio $\alpha$ = $3$ and the total number of channels
$\varsigma$ $=$ $2$. When each sensor $u$ in the network executes the DCAS algorithm,
$u.t$ and $u.S$ are set to $1$ and $\emptyset$, respectively.
In addition, all nodes in $u.V_{G_{r+}}$ are marked as white.
For sensor $f$ in the network, because $f.t = 1$ that is less than or equal to $v.t$ for all $v \in f.V_G$,
that is, $f.t$ $\leq$ $a.t$ $=$ $b.t$ $=$ $d.t$ $=$ $e.t$ $=$ $s.t$, $f$ can execute Procedure SCHEDULE.
When $f$ executes Procedure SCHEDULE, $V$ $=$ $\{v_{a,s}^1$, $v_{e,a}^1$, $v_{d,a}^1$, $v_{f,d}^1$, $v_{a,s}^2$, $v_{e,a}^2$, $v_{d,a}^2$, $v_{f,d}^2\}$ because all nodes in $f.V_{G_{r+}}$ of Fig. \ref{Fig:sub_ERC}
are white. By Definition \ref{def:precedence}, we have that $v_{f,d}^1$ is the node having highest precedence in $V$.
We also have that $f$ will locally broadcast a MSG$\_$DECISION message with $S_{1}$ $=$ ($f \rightarrow d$, $\varphi$, $ch_1$)
to all sensors in $f.V_G$ of Fig. \ref{Fig:sub_forwarding_graph}, where $\varphi$ $=$ $min(\phi(f), \alpha)$ $=$ $min(1, 3)$ $=$ $1$.
In addition, $\phi(f)$, $\phi(d)$, $f.t$, and $d.t$ are updated to $0$, $3$, $2$, and $2$, respectively.
In the same way, sensor $c$ will locally broadcast a MSG$\_$DECISION message with $S_{1}$ $=$ ($c \rightarrow b$, $3$, $ch_1$)
to all sensors in $c.V_G$, and $c.t$ is incremented by 1. When sensor $a$ receives MSG$\_$DECISION messages from $f$ and $c$, all nodes in $a.V_{G_{r+}}$ except for nodes $v_{a,s}^2$ and $v_{e,a}^2$ become black. Because $v_{a,s}^2$ has higher precedence than $v_{e,a}^2$,
sensor $a$ will locally broadcast a MSG$\_$DECISION message with $S_{1}$ $=$ ($a \rightarrow s$, $3$, $ch_2$)
to all sensors in $a.V_G$, and $a.t$ is incremented by 1. When sensor $e$ receives MSG$\_$DECISION messages from $a$, $c$, and $f$, all nodes in $e.V_{G_{r+}}$ become black, and then, $e$ will locally broadcast a MSG$\_$SKIP message to all sensors in $e.V_G$.

When $f.t$ is changed to $2$, because $\phi(f)$ $=$ $0$ and $hop(f)$ $\geq$ $hop(v)$ for all $v \in f.V_G$,
sensor $f$ locally broadcasts a MSG$\_$FINISHED message with $f$ to all sensors in $f.V_G$ and finishes its scheduling process.
In a similar way, after scheduling a number of time slots, sensors will gradually finish the scheduling processes from the outer network to the inner network. When all the neighboring sensors of the sink finish the scheduling processes, the data scheduling
for the network is completed.

\section{Analysis of the DCAS}\label{section:analysis}
In this section, we first prove that the data transmissions scheduled by the DCAS are collision-free
in Theorem \ref{thm:correctness}.
In addition, in the DCAS, a sensor $u$ can make a schedule of transmitting
data from itself to another sensor $y$ with channel $i$ if $v_{u,y}^{i}$ has higher precedence than other possible data transmissions $v_{z,w}^{j}$ in $u.V_{G_{r+}}$;
otherwise, $u$ has to skip the current time slot or wait until other sensors make decisions.
Theorem \ref{thm:no_circular wait} shows that no circular wait will occur in sensors by the DCAS.

\begin{thm}\label{thm:correctness}
The data transmissions scheduled by the DCAS are
collision-free in each time slot.
\end{thm}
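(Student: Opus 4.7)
Proof proposal: The plan is to show that for each time slot $t$, the set $T_t$ of transmissions committed by the DCAS corresponds to an independent set in the extended relative collision graph $G_{r+}$. By the construction of $G_{r+}$, two transmissions collide at time $t$ if and only if the corresponding nodes in $G_{r+}$ are adjacent, so independence of the chosen node set is equivalent to collision-freeness of $T_t$.

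The first step is a locality lemma: any edge $(v_{u,y}^i, v_{u',y'}^{i'})$ in $E_{G_{r+}}$ forces the four sensors $u, y, u', y'$ to lie within a common 3-hop ball. This follows by inspecting the four cases of Definition~\ref{def:collision}: C1 and C3 force either a shared transmitter or that the receiver of one transmission coincides with the transmitter of the other; C2 forces a shared receiver; and C4 forces the receiver of one transmission to be a neighbor of the transmitter of the other. In each case both candidate transmissions are visible in each of $u.V_{G_{r+}}$ and $u'.V_{G_{r+}}$, and each sender can evaluate the precedence of the other's candidate using identical $\omega_{G_{r+}}$ and $\eta_{G_{r+}}$ values.

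The second step is a coordination invariant enforced jointly by the precedence rule and the $u.t$ counters. Since $u$ advances $u.t$ only after a MSG\_DECISION or MSG\_SKIP broadcast, and since Procedure SCHEDULE can be entered at $u.t = t$ only when $u.t \le v.t$ holds for every $v \in u.V_G$, I can argue by induction on the ordering of commitments that before $u$ commits for time slot $t$ it has absorbed every earlier time-$t$ decision of its 3-hop neighbors; in particular, for any such prior decision $v_{u',y'}^{i'}$, the node $v_{u',y'}^{i'}$ and its $G_{r+}$-neighbors are black in $u.V_{G_{r+}}$ when $u$ chooses. Crucially, if instead no such commitment has yet been issued but both $v_{u,y}^i$ and $v_{u',y'}^{i'}$ are white in both senders' views, precedence is a deterministic total order (Definition~\ref{def:precedence}), so only the sender of the higher-precedence candidate satisfies the $u = x$ test in SCHEDULE, while the other sender takes the ``do nothing'' branch and defers; the potential race is resolved by deferral rather than by a conflicting commit.

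Combining the two steps yields the conclusion. Suppose, for contradiction, $v_{u,y}^i, v_{u',y'}^{i'} \in T_t$ are adjacent in $G_{r+}$, and WLOG $u'$ commits first. By the locality lemma $v_{u',y'}^{i'} \in u.V_{G_{r+}}$, and by the coordination invariant both $v_{u',y'}^{i'}$ and every $G_{r+}$-neighbor of it, including $v_{u,y}^i$, are black in $u.V_{G_{r+}}$ at the moment $u$ enters SCHEDULE for $t$. But SCHEDULE only ever selects nodes from the white set $V$ (line~3 of Procedure SCHEDULE), so $u$ could not have scheduled $v_{u,y}^i$, a contradiction. The main obstacle I anticipate is establishing the coordination invariant cleanly: it requires chaining the guard $u.t \le v.t$, the monotonicity of $v.t$, and the discipline of broadcasting MSG\_DECISION before advancing the counter, to rule out any interleaving in which a conflicting pair of commitments inside the 3-hop neighborhood escapes mutual awareness.
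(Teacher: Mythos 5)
Your proposal is correct in substance and rests on the same two pillars as the paper's proof: (i) the locality observation that any edge of $G_{r+}$ involves sensors within a common 3-hop ball, so both conflicting candidates appear in both senders' local graphs $x.V_{G_{r+}}$ and $w.V_{G_{r+}}$, and (ii) the fact that the deterministic total order of Definition~\ref{def:precedence} prevents two mutually conflicting candidates from both being committed. Where you diverge is in how the final contradiction is extracted. The paper argues symmetrically: since each of the two transmissions was scheduled, each must have had higher precedence than the other in the respective sender's view, contradicting antisymmetry of the precedence order; it handles C1 separately (a sensor advances $u.t$ immediately after committing, so it transmits at most once per slot) and never invokes the blackening mechanism at all. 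You instead argue asymmetrically via temporal order: the later committer must see the earlier commitment's node and its $G_{r+}$-neighborhood marked black and so cannot select a conflicting white node. Your route is more faithful to the actual algorithmic machinery, but note that your ``coordination invariant'' is the fragile link --- the guard $u.t \le v.t$ is evaluated against $u$'s possibly stale local estimates of its neighbors' counters, so $u$ may enter SCHEDULE for slot $t$ before the MSG\_DECISION of a 3-hop neighbor arrives. In that race your blackening argument does not apply and you must fall back on your precedence/deferral observation (only the sender of the globally higher-precedence white candidate passes the $u = x$ test), which is exactly the paper's argument; you state this resolution but do not fold it into the final contradiction, so the two cases should be combined explicitly. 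Also, your independent-set framing silently assumes the two conflicting transmissions have distinct transmitters; the same-transmitter case (C1) needs the paper's separate one-line remark, since after $u$ commits it never re-enters SCHEDULE for the same slot.
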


\begin{proof}
By Definition \ref{def:collision}, it suffices to show that no cases in Definition \ref{def:collision}
are occurred by the DCAS. In the DCAS, when sensor $u$ is scheduled to be a transmitter at time slot $u.t$,
$u.t$ is changed to the next time slot immediately.
We have that at most one data transmission sent from $u$ is scheduled at time slot $u.t$, which implies that the case C1 in Definition \ref{def:collision} will not occur in the DCAS. Therefore, it suffices to show
that the cases C2, C3, and C4 in Definition \ref{def:collision} will not occur in the DCAS.
Because the proofs for showing that the cases C2 and C3 are not occurred in the DCAS are similar to
that for the case C4, the proofs for C2 and C3 are omitted here.

Here, we prove that the case C4 will not occur in the DCAS by contradiction.
We assume that there exists at least one case C4 in the DCAS, that is,
there exist one data transmission from sensor $x$ to sensor $u$
and one data transmission from sensor $w$ to sensor $y$ scheduled with channel $i$ at time slot $t$ in the DCAS,
where $u$ and $w$ are neighboring sensors.
This implies that $x$ (or $w$) can make a schedule of transmitting data from $x$ (or $w$)
to its neighboring sensor $u$ (or $y$) with channel $i$ for time slot $t$.
This also implies that $v^i_{x,u}$ (or $v^i_{w,y}$) has higher precedence than other
possible data transmissions $v^j_{p,q}$ in $x.V_{G_{r+}}$ (or $w.V_{G_{r+}}$).
Because $x$ and $w$ are $u$'s neighboring sensors, $x$ is within $2$-hop distance from $w$, and
we have that $v^i_{x,u}$ and $v^i_{w,y}$ are both in $w.V_{G_{r+}}$ and $x.V_{G_{r+}}$ before any schedules made by $x$ and $w$.
Because the data transmissions from $x$ to $u$ and from $w$ to $y$ are finally scheduled,
it implies that $v^i_{x,u}$ has higher precedence than $v^i_{w,y}$, and $v^i_{w,y}$ has higher precedence
than $v^i_{x,u}$, which constitutes a contradiction by Definition \ref{def:precedence}, and thus completes the proof.
\end{proof}

\begin{thm}\label{thm:no_circular wait}
In the DCAS, no sensors are in a circular wait for making schedules of data transmissions.
\end{thm}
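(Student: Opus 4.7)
The plan is to prove Theorem~\ref{thm:no_circular wait} by contradiction, using the strictly-ordered time-slot counters $u.t$ as a monovariant along any supposed wait-cycle. Suppose, toward contradiction, that at some point in the execution there is a circular wait involving sensors $u_1, u_2, \ldots, u_k$ (with indices taken modulo $k$, so $u_{k+1} = u_1$), where each $u_i$ is blocked and waiting on $u_{i+1}$. I would first pin down, from the DCAS description, exactly what "waits on" means operationally: by the guard in Algorithm~\ref{alg:DCAS}, a sensor $u$ refrains from invoking Procedure SCHEDULE precisely when there exists some neighbor $v \in u.V_G$ with $v.t < u.t$. Hence "$u_i$ waits on $u_{i+1}$" translates into the two conditions $u_{i+1} \in u_i.V_G$ and $u_{i+1}.t < u_i.t$.

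Next I would chain these inequalities around the cycle. From $u_{i+1}.t < u_i.t$ for every $i \in \{1, \ldots, k\}$, one gets
\begin{equation*}
u_1.t > u_2.t > u_3.t > \cdots > u_k.t > u_{k+1}.t = u_1.t,
\end{equation*}
which is an immediate contradiction, since it would force $u_1.t > u_1.t$. This contradiction rules out the existence of any circular wait, finishing the argument. The only subtlety is that all counters $u_i.t$ must be evaluated at the same logical instant so that the strict inequalities compose; this is legitimate because a circular wait, by definition, is a configuration in which none of the involved sensors can advance, so each $u_i.t$ is fixed while the cycle persists, and the snapshot values can be chained as above.

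The step I expect to require the most care is the first one: precisely formalizing what "circular wait" means in this distributed, message-driven setting, and justifying that the blocking condition is exactly $u_i.t > u_{i+1}.t$ rather than something involving pending MSG\_DECISION or MSG\_SKIP messages still in transit. Once one accepts the synchronous abstraction of the counters used throughout Section~\ref{sec:proposed_algorithm} (in which a sensor's local view $u.t$ monotonically increases each time Procedure SCHEDULE returns, whether via a scheduled transmission or a MSG\_SKIP), the remainder of the proof is just the observation that a finite strictly-decreasing cycle in $\mathbb{Z}$ is impossible. No properties of $\omega_{G_{r+}}$, $\eta_{G_{r+}}$, or the precedence relation of Definition~\ref{def:precedence} are needed here; those were the machinery behind Theorem~\ref{thm:correctness}, whereas the present statement is purely about the synchronization logic of the DCAS.
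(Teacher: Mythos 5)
There is a genuine gap: you have formalized the wrong wait relation. You take ``$u_i$ waits on $u_{i+1}$'' to mean $u_{i+1}.t < u_i.t$, i.e., that $u_i$ is held up only by the guard ``$u.t \le v.t$ for all $v \in u.V_G$'' in Algorithm~\ref{alg:DCAS}. But that guard is not the only (or even the main) source of blocking. A sensor $u$ can pass the guard, invoke Procedure SCHEDULE, and still make no progress: SCHEDULE is an if/elsif with no else branch, so when the highest-precedence white node $v_{x,y}^{i}$ in $u.V_{G_{r+}}$ has $x \neq u$ and the white set $V$ is nonempty, the procedure returns without scheduling anything, without broadcasting MSG\_SKIP, and \emph{without incrementing} $u.t$. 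Your claim that ``$u.t$ monotonically increases each time Procedure SCHEDULE returns, whether via a scheduled transmission or a MSG\_SKIP'' is therefore false of the algorithm as written, and it is exactly this deferral --- $u$ waiting for the owner $x$ of a higher-precedence transmission to decide first --- that the theorem (and the paragraph introducing it in Section~\ref{section:analysis}) is about. In the dangerous configuration all sensors in the putative cycle sit at the \emph{same} time slot $t$, each deferring to another via precedence; there your chain of strict inequalities on the counters is vacuous, since all the $u_i.t$ are equal and no contradiction arises.

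The paper's proof targets precisely this precedence-based deferral: it assumes each $u_{j}$ waits because some transmission $v^{c_{j+1}}_{u_{j+1},p_{j+1}}$ owned by $u_{j+1}$ has higher precedence than every transmission owned by $u_j$, chains these around the cycle to get $prec(v^{c_0}_{u_0,p_0}) < prec(v^{c_1}_{u_1,p_1}) < \cdots < prec(v^{c_n}_{u_n,p_n}) < prec(v^{c_0}_{u_0,p_0})$, and concludes by the irreflexivity of the strict total order of Definition~\ref{def:precedence}. So the precedence relation, which you explicitly set aside as unnecessary, is the essential ingredient. Your counter-based argument does correctly dispose of the other, easier wait relation (a neighbor's counter lagging behind), and could be kept as a complementary observation, but on its own it does not establish the theorem.
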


\begin{proof}
Assume that there exist sensors $u_0$, $u_1$, $u_2$, $\ldots$, $u_n$ in a circular wait for making schedules of data transmissions.
It implies that there exist some data transmission $v^{{c_1}}_{u_1,{p_1}}$ having higher precedence than $v^{{c'_0}}_{u_0,{p'_0}}$ for any ${c'_0}$ and ${p'_0}$,
some data transmission $v^{{c_2}}_{u_2,{p_2}}$ having higher precedence than $v^{{c'_1}}_{u_1,{p'_1}}$ for any ${c'_1}$ and ${p'_1}$,
some data transmission $v^{{c_j}}_{u_{j},{p_j}}$ having higher precedence than $v^{{c'_{j-1}}}_{u_{j-1},{p'_{j-1}}}$ for any ${c'_{j-1}}$ and ${p'_{j-1}}$,
where $1 \leq j \leq n$.
In addition, due to the circular wait, there exists some data transmission $v^{{c_0}}_{u_0,{p_0}}$ having higher precedence than $v^{{c'_n}}_{u_n,{p'_n}}$ for any ${c'_n}$ and ${p'_n}$. Let $prec(v^{{c_i}}_{u_i,{p_i}})$ denote the precedence of the data transmission $v^{{c_i}}_{u_i,{p_i}}$.
We therefore have that $prec(v^{{c_0}}_{u_0,{p_0}})$ $<$ $prec(v^{{c_1}}_{u_1,{p_1}})$ $<$ $\ldots$ $<$
$prec(v^{{c_n}}_{u_n,{p_n}})$ $<$ $prec(v^{{c_0}}_{u_0,{p_0}})$, which constitutes a contradiction.
This thus completes the proof.
\end{proof}

\section{Performance Evaluation}\label{section:Simulation}
In this section, simulations were used to evaluate the performance
of the DCAS. In the simulations, $50$-$350$ sensors, including a sink,
were randomly deployed in a $L \times L$ square area,
where the transmission range of each sensor was set to $5$ and $L \in {{\mathbb{Z}}^{+}}$.
In addition, the number of units of raw data generated by each sensor was randomly selected from the interval $[1,\beta]$, where $\beta \in \mathbb{Z}^+$.
To demonstrate the performance of the DCAS, the RMSA \cite{Bagaa2014293}, which is used for multi-channel scheduling in WSNs, is considered to be compared.
In the RMSA, the aggregated data is allowed to be forwarded to multiple selected nodes for collection reliability.
In the simulation, the aggregated data is only forwarded to one node selected by the RMSA.
In addition, because the RMSA is only used for the WSNs with $\alpha = \infty$,
to apply to the multi-channel WSNs with different $\alpha$,
the schedule generated by the RMSA is periodically used for forwarding aggregated data, until all data is collected by the sink.
If one time slot has no aggregated data required to be forwarded, the time slot is removed from the schedule.
In the following subsections, the simulation results were obtained by averaging $100$ data.

\subsection{Impact of the Number of Channels}\label{section:change_channel}
Here, we show how the number of channels affects the
performance of our proposed algorithm.
Fig. \ref{Fig:compare_channel_a}, Fig. \ref{Fig:compare_channel_b}, and Fig. \ref{Fig:compare_channel_c}
illustrate the results in terms of the total number of the time slots required
in WSNs with $1$, $2$, and $4$ channels, respectively, when $\beta$ $=$ $3$, the network size is $30 \times 30$, the number of sensors is $200$, and the data aggregation ratio $\alpha$ ranges from $1$ to $128$.
In Fig. \ref{Fig:compare_channel_a}, it is clear that
the DCAS outperforms the RMSA in WSNs with different $\alpha$ value because
the capability of aggregating data and the utilization of multiple channels are both considered in the DCAS.
In addition, the higher the $\alpha$ value, the significantly lower the total number of the time slots required by the RMSA or the DCAS. This is because more raw data can be aggregated into one packet with
an increasing $\alpha$ value.
Similar results are also shown in Fig. \ref{Fig:compare_channel_b} (or Fig. \ref{Fig:compare_channel_c}).
Moreover, by comparing the results in Fig. \ref{Fig:compare_channel_a} and
Fig. \ref{Fig:compare_channel_b} (or Fig. \ref{Fig:compare_channel_b} and Fig. \ref{Fig:compare_channel_c}),
the RMSA (or the DCAS) requires fewer time slots when the number of
channels increases. This stems from the fact that when more channels can be used,
more data transmissions can be scheduled into one time slot, and thus, the total number of the required
time slots is decreased.

\begin{figure}
\center
\subfigure[]{\includegraphics[width=7.5cm]{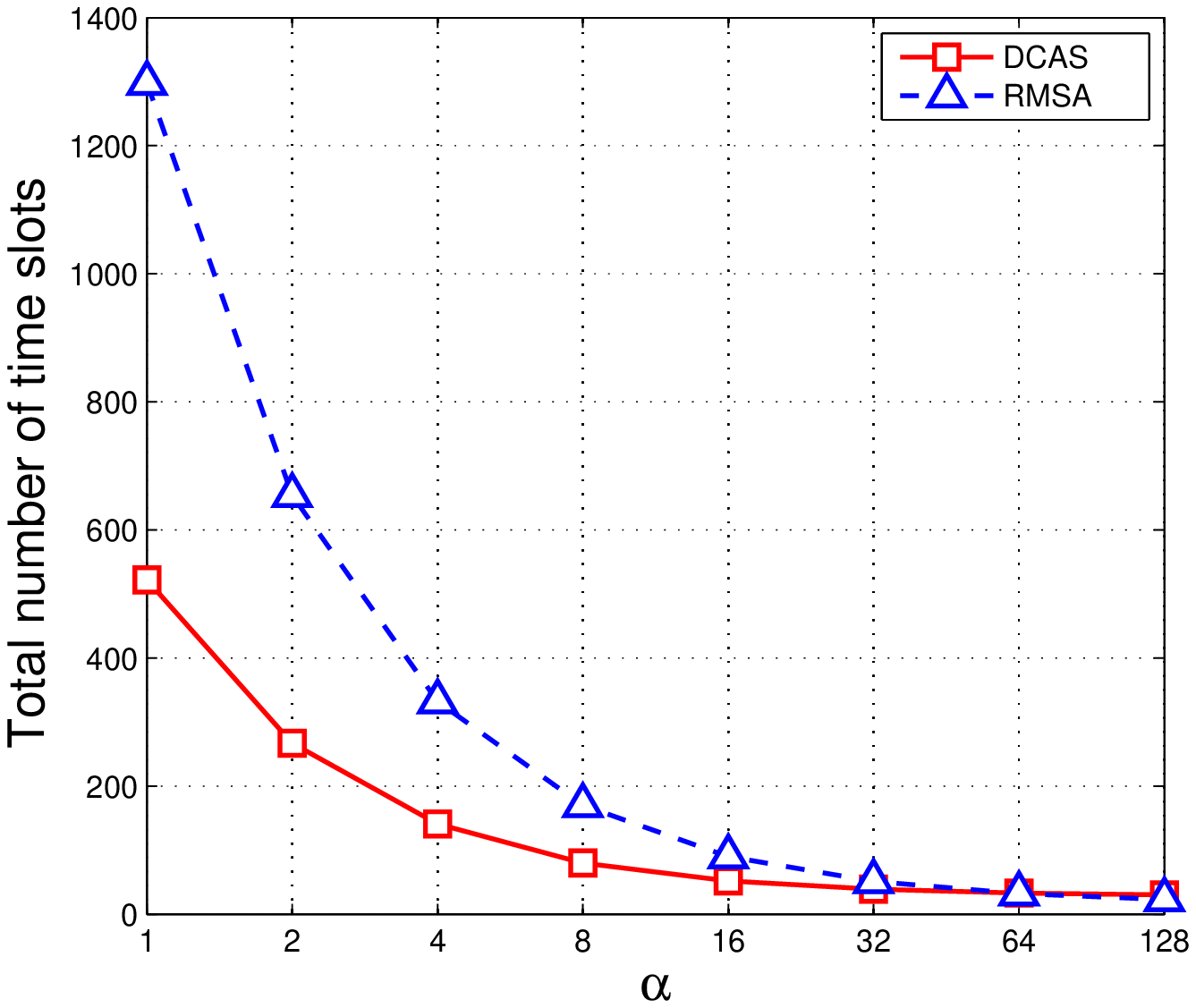} \label{Fig:compare_channel_a}}
\subfigure[]{\includegraphics[width=7.5cm]{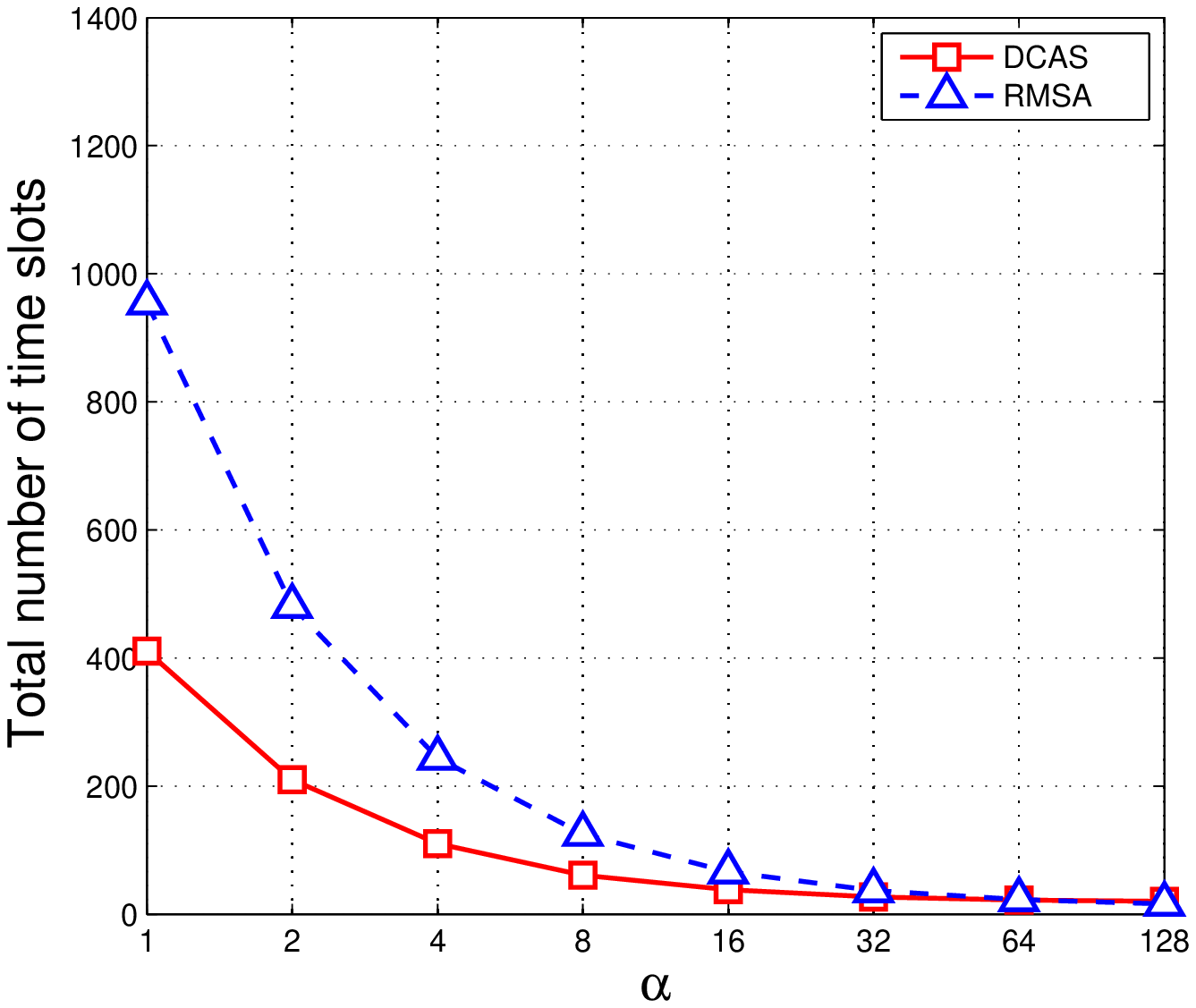} \label{Fig:compare_channel_b}}
\subfigure[]{\includegraphics[width=7.5cm]{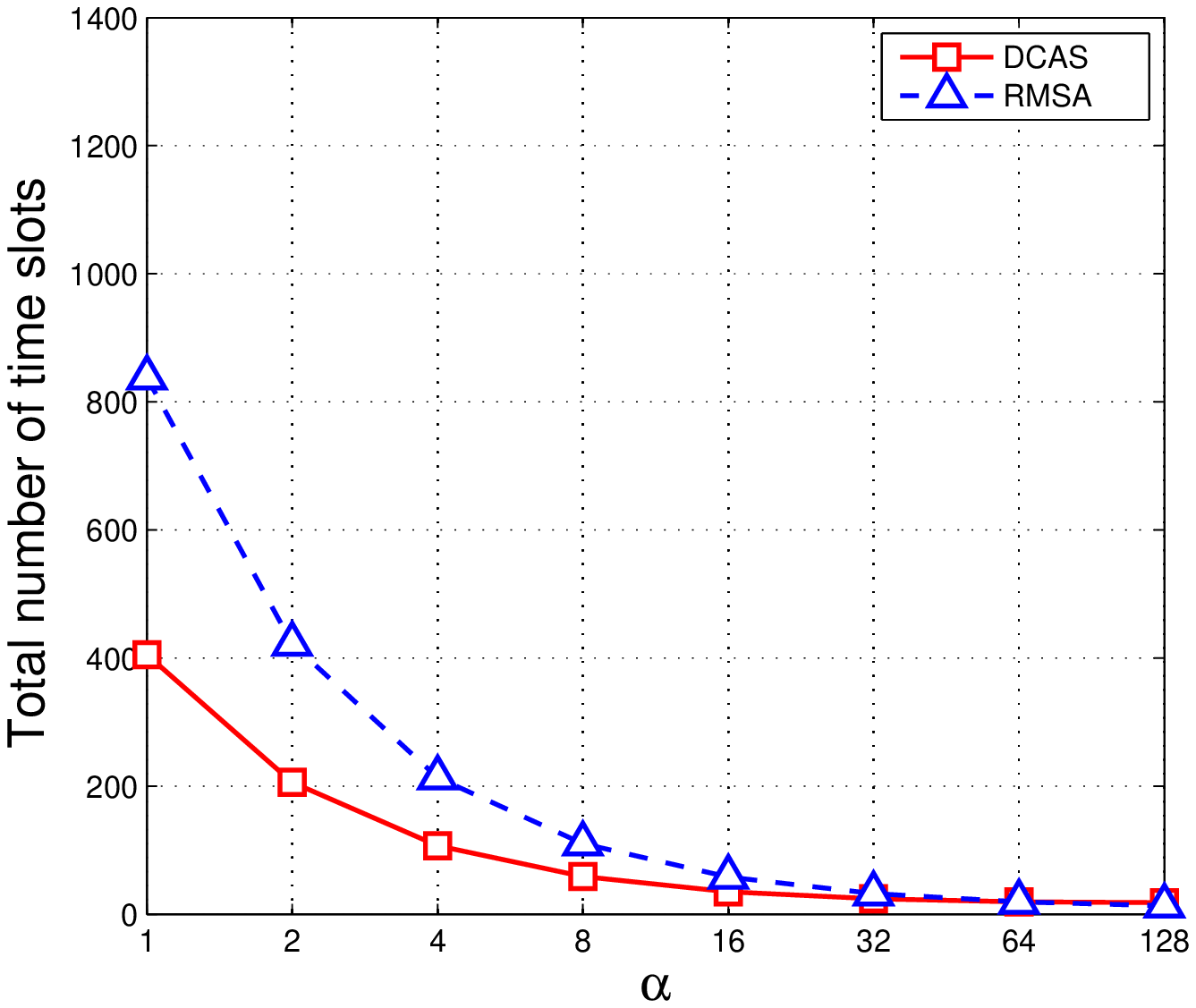} \label{Fig:compare_channel_c}}
\caption{The total number of the time slots required by the RMSA
and the DCAS in WSNs with $\beta$ $=$ $3$, the network size equal to $30 \times 30$, the number of sensors equal to $200$, and $\alpha$ ranging
from $1$ to $128$. The numbers of channels are $1$ in (a), $2$ in (b), and $4$ in (c), respectively.} \label{Fig:compare_channel}
\end{figure}

\subsection{Impact of the Number of Sensors}\label{section:change_sensor}
We show how the number of sensors affects the
performance of our proposed algorithm in this section.
Fig. \ref{Fig:compare_sensor_a}, Fig. \ref{Fig:compare_sensor_b}, and Fig. \ref{Fig:compare_sensor_c} show the simulation results concerning the total number of the time slots required in WSNs with $50$, $200$, and $350$ sensors, respectively,
when $\beta$ $=$ $3$, the network size is $30 \times 30$, the number of channels is $2$, and $\alpha$ ranges from $1$ to $128$.
In these figures, it is clear that the DCAS outperforms the RMSA in WSNs with the number of sensors equal to $50$, $200$, or $350$.
It is also clear that the proportion of the results are similar in
Fig. \ref{Fig:compare_sensor_a}, Fig. \ref{Fig:compare_sensor_b}, and Fig. \ref{Fig:compare_sensor_c}.
When $\alpha$ value increases, the total number of the time slots required by the RMSA or the DCAS is decreased
because more raw data are allowed to be aggregated into one packet.
Fig. \ref{Fig:compare_sensor_a},
Fig. \ref{Fig:compare_sensor_b}, and Fig. \ref{Fig:compare_sensor_c}
also show that the RMSA (or the DCAS) requires more time slots when the number of
sensors increases. This is because more raw data are generated and required to be forwarded to the sink,
and therefore, more time slots are required by the RMSA (or the DCAS).

\begin{figure}
\center
\subfigure[]{\includegraphics[width=7.5cm]{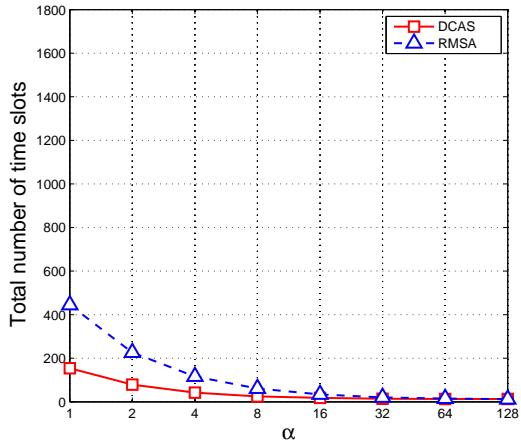} \label{Fig:compare_sensor_a}}
\subfigure[]{\includegraphics[width=7.5cm]{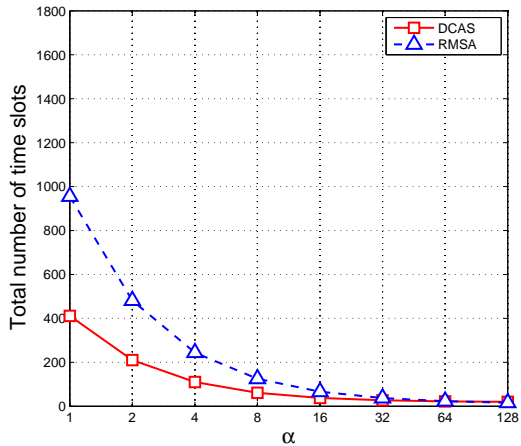} \label{Fig:compare_sensor_b}}
\subfigure[]{\includegraphics[width=7.5cm]{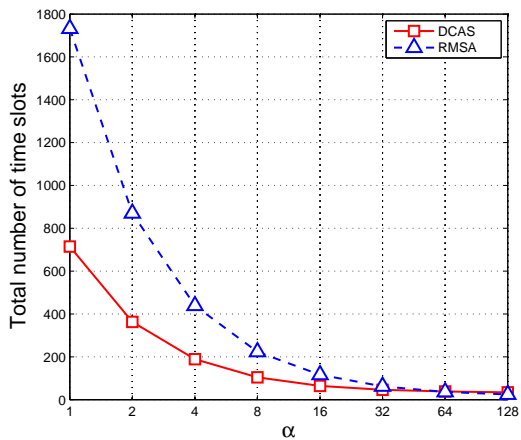} \label{Fig:compare_sensor_c}}
\caption{The total number of the time slots required by the RMSA
and the DCAS in WSNs with $\beta$ $=$ $3$, the network size equal to $30 \times 30$, the number of channels equal to $2$, and $\alpha$ ranging
from $1$ to $128$. The numbers of sensors are $50$ in (a), $200$ in (b), and $350$ in (c), respectively.
} \label{Fig:compare_sensor}
\end{figure}

\subsection{Impact of the Network Size}\label{section:change_net_size}
In this section, we show how the network size affects the
performance of the DCAS. Fig. \ref{Fig:compare_size_a},
Fig. \ref{Fig:compare_size_b}, and Fig. \ref{Fig:compare_size_c}
illustrate the results in terms of the number of the time slots
required in WSNs with network sizes equal to $20 \times 20$, $30 \times 30$, and $40 \times 40$, respectively,
when $\beta$ $=$ $3$, the number of sensors is $200$, the number of channels is $2$, and $\alpha$ ranges from $1$ to $128$.
As observed in previous simulation results, the DCAS requires a significantly lower number of time slots than the RMSA in most of cases in
Fig. \ref{Fig:compare_size_a}, Fig. \ref{Fig:compare_size_b}, and Fig. \ref{Fig:compare_size_c}.
In addition, the more the $\alpha$ value, the lower the number of the time slots required by the RMSA (or the DCAS), which is also observed
in previous simulation results.
Note that in Fig. \ref{Fig:compare_size_a}, Fig. \ref{Fig:compare_size_b}, and Fig. \ref{Fig:compare_size_c},
when the network size increases, the DCAS requires an increasing number of time slots.
Also note that in Fig. \ref{Fig:compare_size_a} and Fig. \ref{Fig:compare_size_c} (or in Fig. \ref{Fig:compare_size_b} and Fig. \ref{Fig:compare_size_c}),
the RMSA has similar results.
This is because when $200$ sensors are randomly distributed in a larger sensing field,
the minimum hop count between sensors
and the sink has a probability to be increased, and thus, more time slots are required by the RMSA (or the DCAS).
It is noted that in Fig. \ref{Fig:compare_size_a} and Fig. \ref{Fig:compare_size_b}, the RMSA has a decreasing number of time slots when the network size increases.
This is because when the network size increases, the number of neighboring sensors has a probability to be decreased,
and a scheduled data transmission has a lower probability to affect sensors for other data transmissions, which dominates the performance of the RMSA in Fig. \ref{Fig:compare_size_b}.
And therefore, there is a higher probability to schedule more data transmissions in one time slot in the RMSA.

\begin{figure}
\center
\subfigure[]{\includegraphics[width=7.5cm]{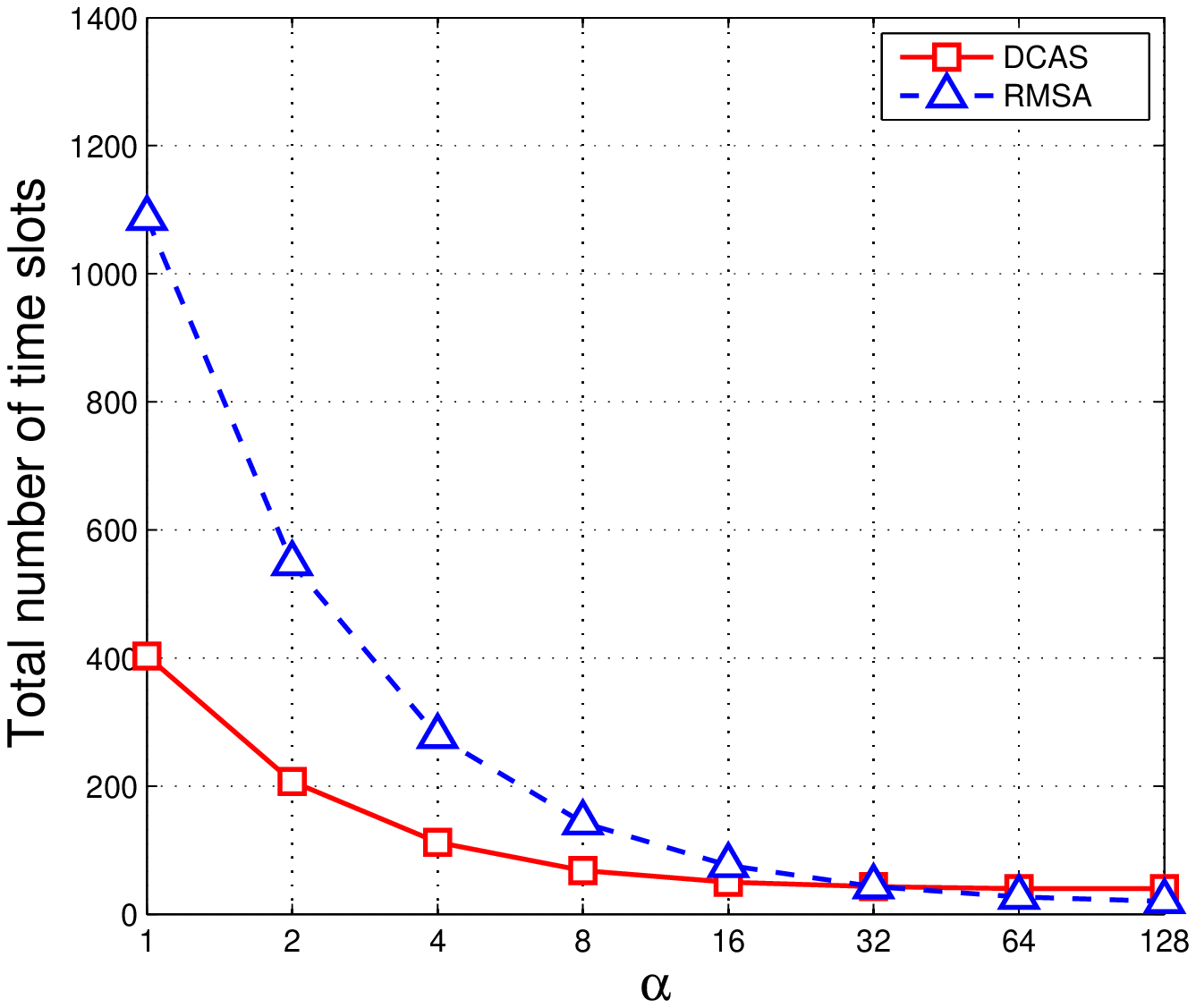} \label{Fig:compare_size_a}}
\subfigure[]{\includegraphics[width=7.5cm]{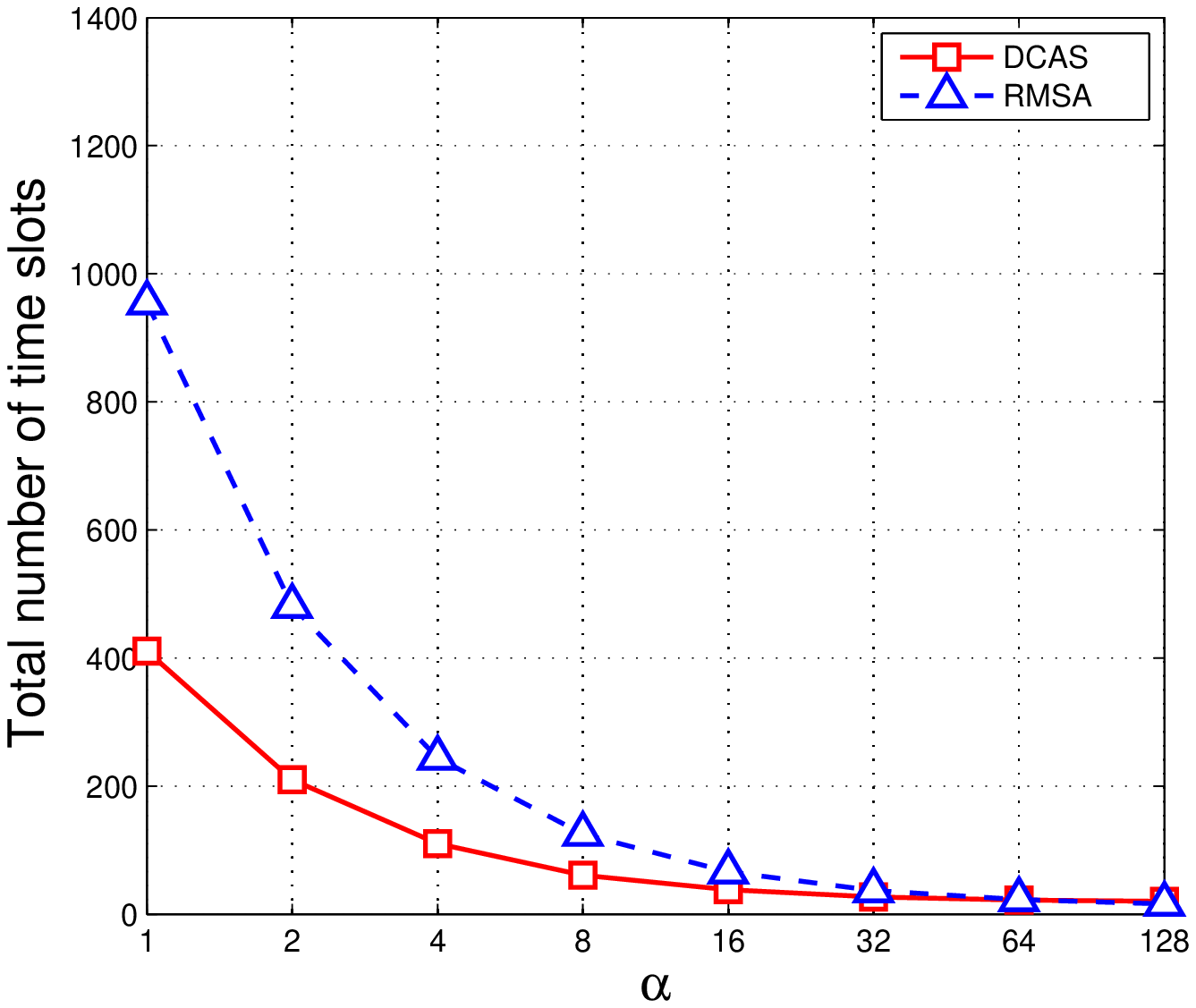} \label{Fig:compare_size_b}}
\subfigure[]{\includegraphics[width=7.5cm]{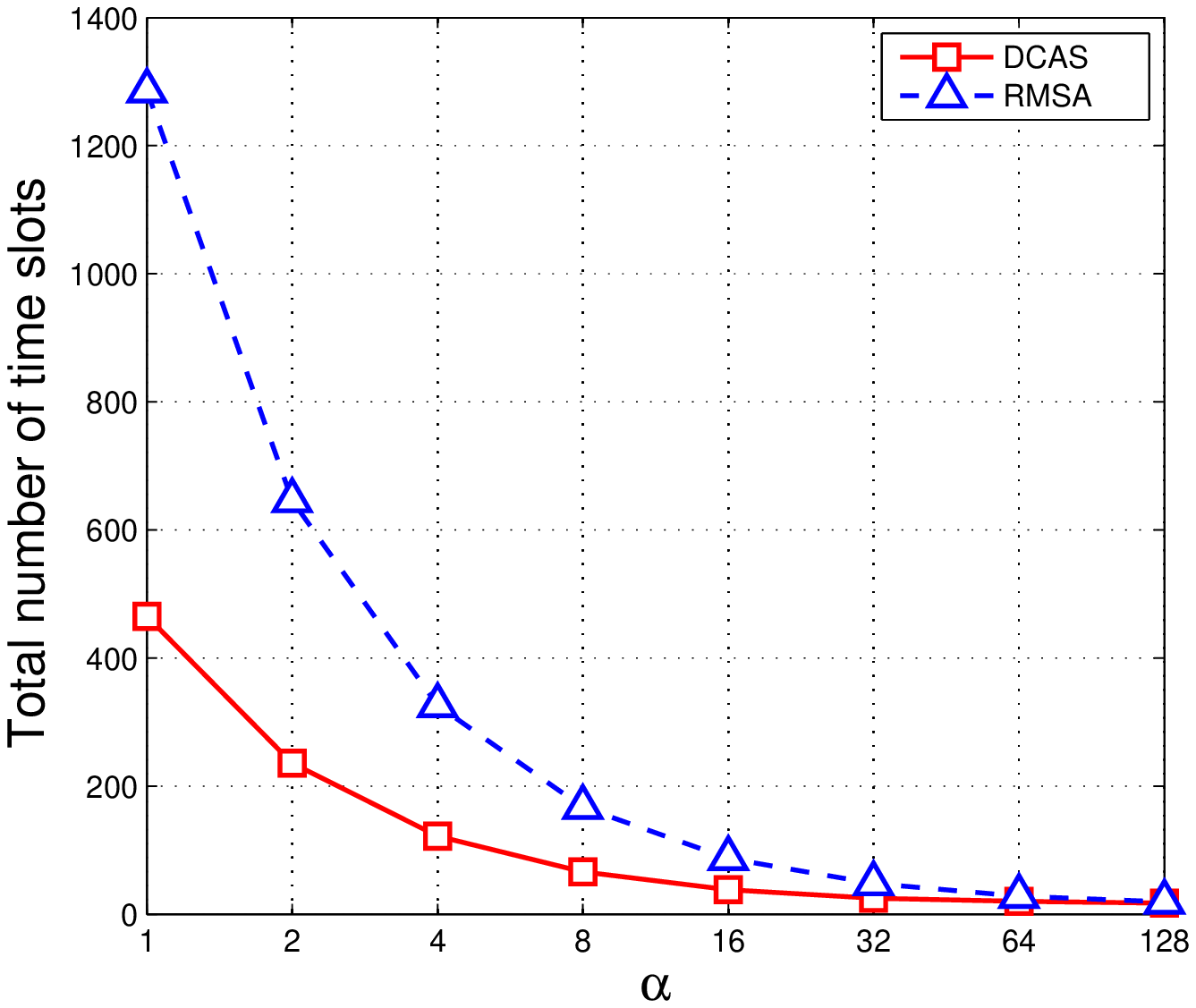} \label{Fig:compare_size_c}}
\caption{The total number of the time slots required by the RMSA
and the DCAS in WSNs with $\beta$ $=$ $3$, the number of sensors equal to $200$, the number of channels equal to $2$, and $\alpha$ ranging
from $1$ to $128$. The network sizes are $20 \times 20$ in (a), $30 \times 30$ in (b), and $40 \times 40$ in (c), respectively.
} \label{Fig:compare_size}
\end{figure}

\subsection{Impact of the Number of Units of Raw Data Generated by Sensors}\label{section:change_data}
In this section, we show how the number of units of raw data generated by sensors affects the
performance of our proposed algorithm.
Fig. \ref{Fig:compare_data_a},
Fig. \ref{Fig:compare_data_b} and Fig. \ref{Fig:compare_data_c} show the simulation results concerning the number of the time slots
required in WSNs with the values of $\beta$ equal to $1$, $3$, and $5$, respectively,
when the network size is $30 \times 30$, the number of sensors is $200$, the number of channels is $2$, and $\alpha$ ranges from $1$ to $128$.
In Fig. \ref{Fig:compare_data_a}, Fig. \ref{Fig:compare_data_b}, and Fig. \ref{Fig:compare_data_c},
it is noted that the DCAS provides a better performance than the RMSA, and that the DCAS (or the RMSA) has a lower
number of the time slots with an increasing $\alpha$ value, as observed in the previous simulation results.
In addition, when the value of $\beta$ increases from $1$, through $3$, to $5$,
the number of the time slots required by the RMSA (or the DCAS) is increased.
This is because more raw data are generated by sensors, more packets have to be forwarded to the sink.
And therefore, more time slots are required by the RMSA (or the DCAS).

\begin{figure}
\center
\subfigure[]{\includegraphics[width=7.5cm]{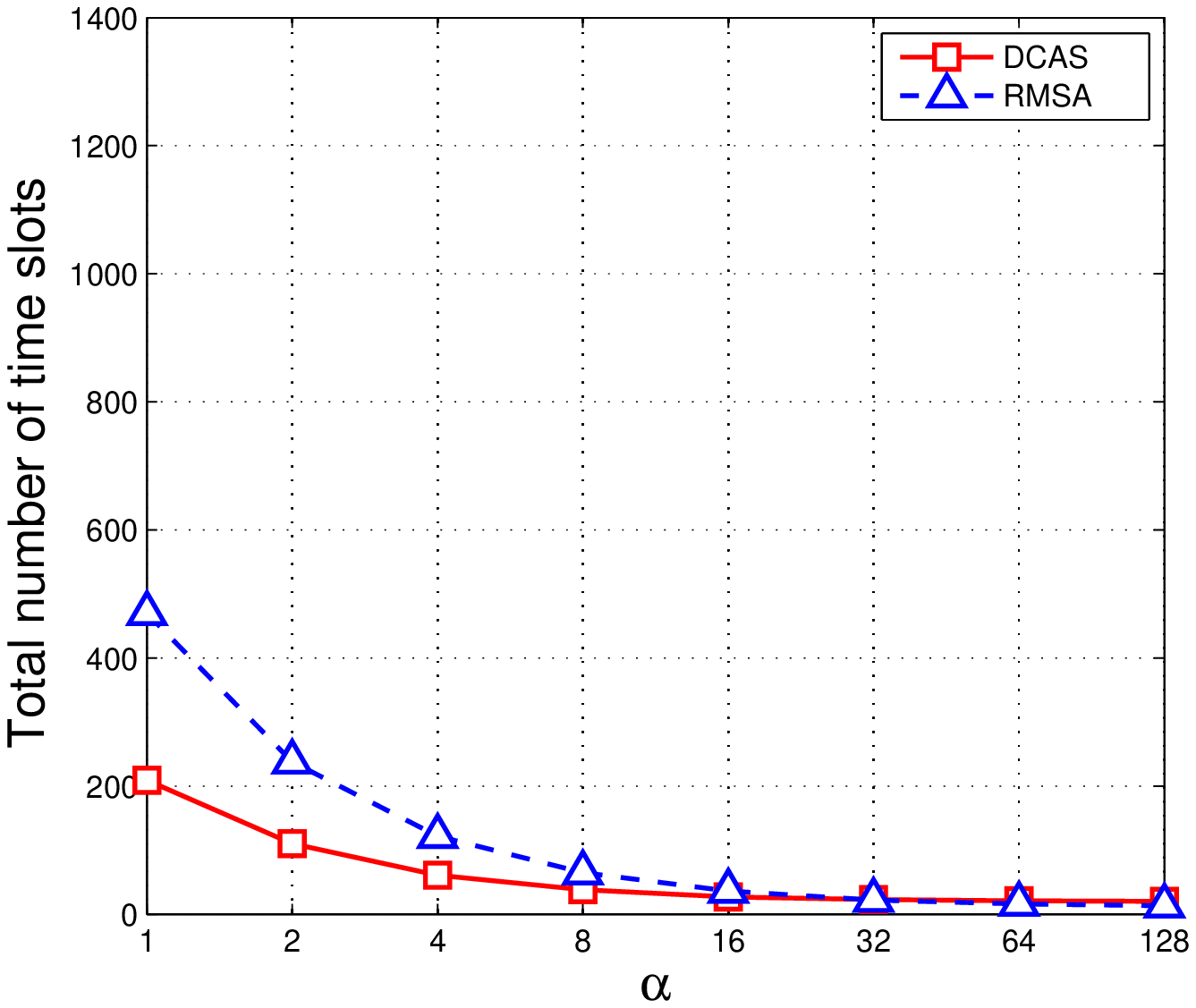} \label{Fig:compare_data_a}}
\subfigure[]{\includegraphics[width=7.5cm]{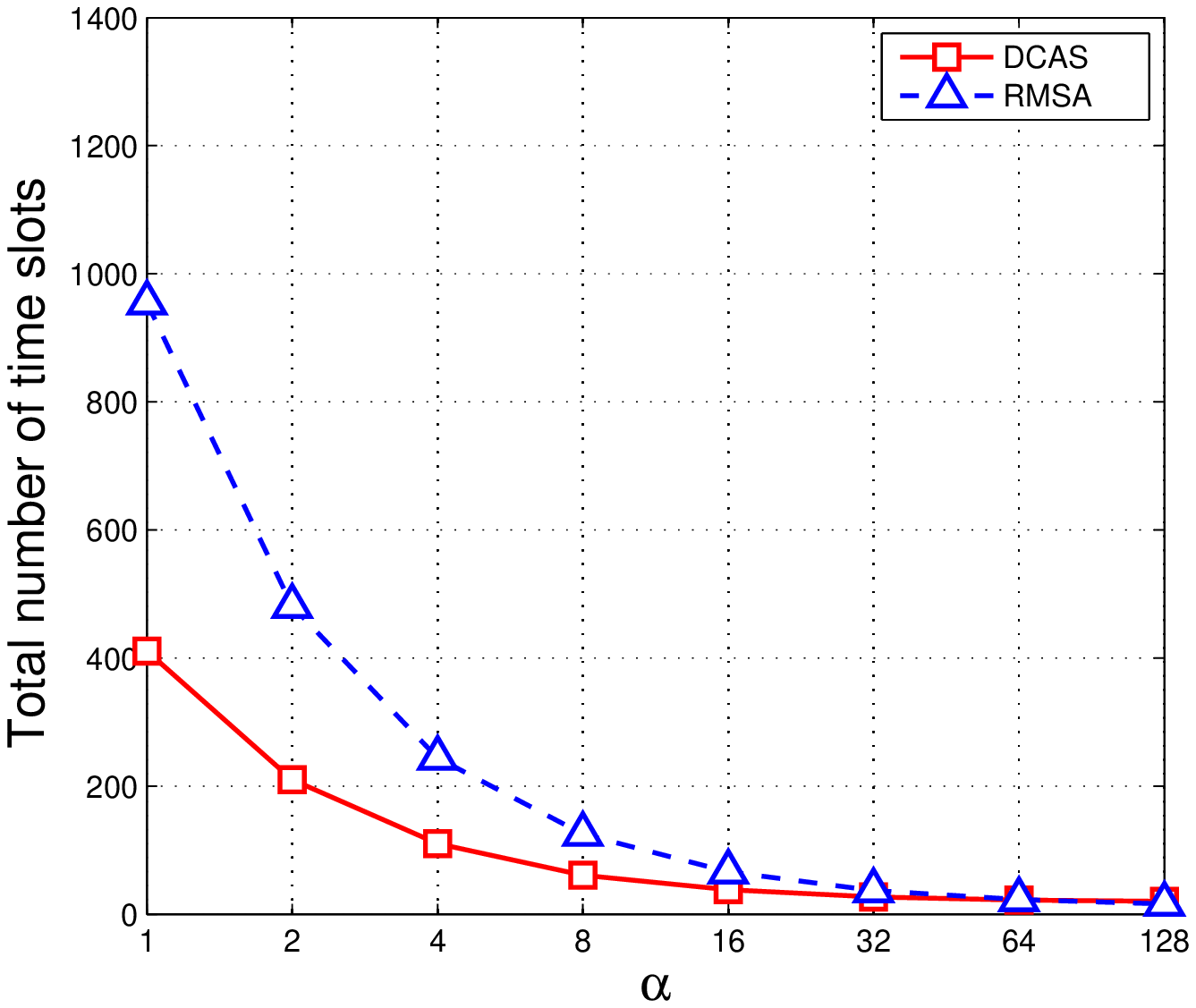} \label{Fig:compare_data_b}}
\subfigure[]{\includegraphics[width=7.5cm]{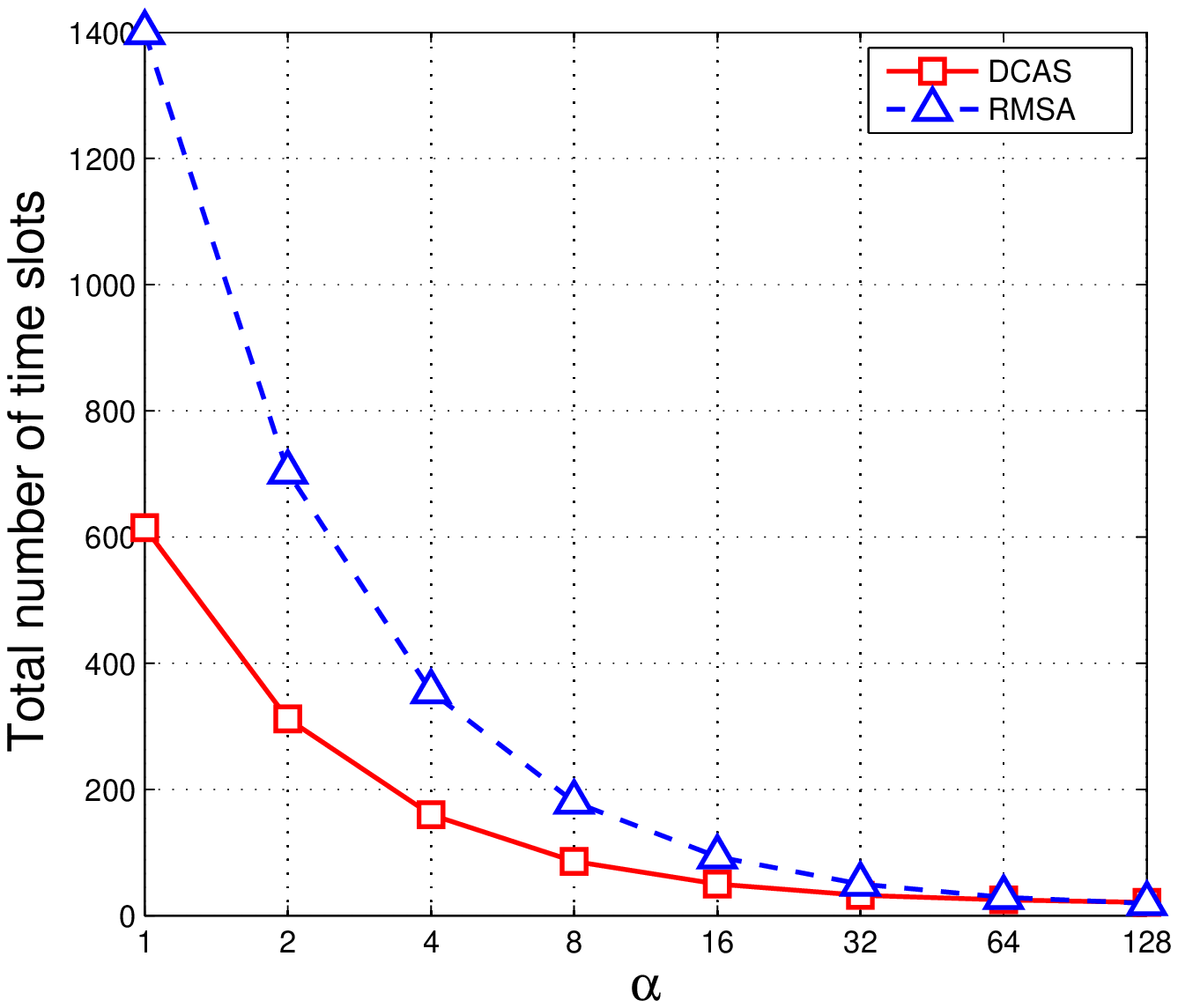} \label{Fig:compare_data_c}}
\caption{The total number of the time slots required by the RMSA
and the DCAS in WSNs with the network size equal to $30 \times 30$, the number of sensors equal to $200$, the number of channels equal to $2$, and $\alpha$ ranging
from $1$ to $128$. The values of $\beta$ are $1$ in (a), $3$ in (b), and $5$ in (c), respectively.}
\end{figure}

\section{Conclusion}\label{section:Conclusion}
Guaranteeing minimum latency of data collection in distributed WSNs as well as eliminating
the data collisions is a challenging undertaking. In this paper, we investigated the problem
of Minimum-Latency Collision-Avoidance Multiple-Data-Aggregation Scheduling with
Multi-Channel (MLCAMDAS-MC). To avoid data collisions, we construct a extended relative collision graph $G_{r+}$ to represent the data collisions
of transmissions. Subsequently, based on the obtained $G_{r+}$, we propose
a distributed collision-avoidance scheduling (DCAS) algorithm with fairness assumption for distributed WSNs.
Theoretical analyses of DCAS shows that the data collisions are completely eliminated
in the network. Extensive simulation results demonstrate that DCAS better performance
with a lower number of required time slots than the most recently published distributed algorithm.


\normalem

\bibliographystyle{IEEEtran}
\bibliography{my}

\end{document}